\newcommand{\EBeta}[2]{\mathrm{B}(#1,#2)}
\newtheorem{theorem}{Theorem}
\newtheorem{Definition}[theorem]{Definition}
\newtheorem{Lemma}[theorem]{Lemma}
\newtheorem{Claim}[theorem]{Claim}
\begin{document}

\begin{frontmatter}
%\title{This is a specimen title\tnoteref{t1,t2}}
%\tnotetext[t1]{This document is a collaborative effort.}
\title{On the 
Displacement for Covering a $d-$dimensional Cube with Randomly Placed Sensors\tnoteref{t1}}
%\fnref{awardfootnote}
%\corref{cor2}
\tnotetext[t1]{This is an expanded and revised version of a paper that appeared in \cite{KapelkoK15} and received the Best Paper Award at the 14th International Conference, ADHOC-NOW 2015.}
%On Displacement to the Power of Random Sensors for Coverage in 1D - draft version
%\thanks{Supported by grant nr XXX} %2012/S1????? of the Institute of Mathematics and Computers Science of the Wroc{\l}aw University of Technology} 
%\author{}
\author[pwr]{Rafa\l{} Kapelko\corref{cor1}\fnref{pwrfootnote}}
\ead{rafal.kapelko@pwr.edu.pl}
\author[scs]{Evangelos Kranakis\fnref{scsfootnote}}%\corref{cor2}}
\ead{kranakis@scs.carleton.ca }
%\fntext[awardfootnote]{This is an expanded and revised version of a paper that appeared in \cite{KapelkoK15} and received the Best Paper Award at the 14th International Conference, 
%ADHOC-NOW 2015.}
\fntext[pwrfootnote]{Research supported by grant nr S40012/K1102}
\fntext[scsfootnote]{Research supported in part by NSERC Discovery grant.}
\cortext[cor1]{Corresponding author at: Department of Computer Science,
Faculty of Fundamental Problems of Technology, Wroc{\l}aw University of Technology, 
 Wybrze\.{z}e Wyspia\'{n}skiego 27, 50-370 Wroc\l{}aw, Poland. Tel.: +48 71 320 33 62; fax: +48 71 320 07 51.}
 %\cortext[cor2]{Corresponding author at: School of Computer Science, Carleton University
%1125 Col. By Dr., Ottawa, ON K1S 5B6, Canada. Tel.: 613 520 2600 x 8090; fax:a 613 520 4334.}
\address[pwr]{Department of Computer Science,Faculty of Fundamental Problems of Technology, Wroc{\l}aw University of Technology, Poland}
\address[scs]{School of Computer Science, Carleton University, Ottawa, ON, Canada}

\begin{abstract}
Consider $n$ sensors placed randomly and independently with the uniform distribution
in a $d-$dimensional unit cube ($d\ge 2$).  
The sensors have identical sensing range equal to $r$, for some $r >0$. We are interested in moving the sensors from their initial positions to new positions so as to ensure that the $d-$dimensional unit cube  is completely covered, i.e., 
every point in the $d-$dimensional cube is within the range of a sensor.  If the $i$-th sensor is displaced a distance $d_i$, what is a displacement of minimum cost? As cost measure for the displacement of the team of sensors we consider the {\em $a$-total movement} defined as the sum $M_a:= \sum_{i=1}^n d_i^a$, for some constant $a>0$. We assume that $r$ and $n$ are chosen so as to allow full coverage 
of the $d-$dimensional unit cube and $a > 0$. 

The main contribution of the paper is to show the existence of a tradeoff between the $d-$dimensional cube, sensing radius and $a$-total movement. The main results can be summarized as follows for the case of the $d-$dimensional cube.
\begin{enumerate}
\item
If the  $d-$dimensional cube sensing radius is $\frac{1}{2n^{1/d}}$ and $n=m^d$, for some $m\in N$,
then we present an algorithm that uses
$O\left(n^{1-\frac{a}{2d}}\right)$ total expected movement 
(see Algorithm \ref{alg_3_4} and Theorem \ref{thm:3_4}).
\item 
If the
the $d-$dimensional cube 
sensing radius is greater than 
$\frac{3^{3/d}}{(3^{1/d}-1)(3^{1/d}-1)}\frac{1}{2n^{1/d}}$ and $n$ is a natural number
then the total expected movement is 
$O\left(n^{1-\frac{a}{2d}}\left(\frac{\ln n}{n}\right)^{\frac{a}{2d}}\right)$ 
(see Algorithm \ref{alg_log} and Theorem \ref{thm:log}).
\end{enumerate}
This sharp decline from $O(n^{1- \frac{a}{2d}})$ to $O(n^{1-\frac{a}{2d}} (\ln n)^{\frac{a}{2d}} )$
in the $a$-total movement of the sensors to attain complete coverage of the $d-$dimensional cube
indicates the presence of an interesting threshold on the sensing radius in a $d-$dimensional cube as it increases from $\frac{1}{2n^{1/d}}$ to 
$ \frac{3^{3/d}}{(3^{1/d}-1)(3^{1/d}-1)}\frac{1}{2n^{1/d}}$.
In addition, we simulate Algorithm \ref{alg_3_4} and discuss the results of our simulations.

\end{abstract}

\begin{keyword}
Displacement, Random, Sensors, $d-$dimensional Cube
\end{keyword}

\end{frontmatter}

\section{Introduction}
\label{sec:Introduction}

A key challenge in utilizing effectively a group of sensors is to make them form 
an interconnected structure with good communication characteristics. For example, one may want to establish a sensing and communication infrastructure for robust connectivity, surveillance, security, or even reconnaissance of an urban environment using a limited number of sensors. For a team of sensors initially placed in a geometric domain such a robust connectivity cannot be assured a priori e.g., due to geographic obstacles (inhibiting transmissions), harsh environmental conditions (affecting signals), sensor faults (due to misplacement), etc. In those cases it may be required that a group of sensors originally placed in a domain be displaced to new positions either by a centralized or distributed controller. The main question arising is what is the cost of displacement so as to move the sensors from their original positions to new positions so as to attain the desired communication characteristics?

A typical sensor is able to sense a limited region usually defined by its sensing radius, say $r$, and considered to be a circular domain (disc of radius $r$). To protect a larger region against intruders every point of the region must be within the sensing range of at least one of the sensors in the group. Moreover, by forming a communication network with these sensors one is able to transmit to the entire region any disturbance that may have occurred in any part of the region. This approach has been previously studied in several papers. It includes research on 1) {\em area coverage} in which one ensures monitoring of an entire region \cite{full_coverage2,full_coverage1}, and 2) on {\em perimeter} or {\em barrier coverage} whereby a region is protected by monitoring its perimeter thus sensing intrusions or withdrawals to/from the interior \cite{barriercoverageNodeDegree,Evan09,MinMax,barrierMinSum,barriercoverage05}.
Note that barrier coverage is less expensive (in terms of number of sensors) than area coverage. Nevertheless, barrier coverage can be only used to monitor intruders to the area, as opposed to area coverage that can also protect the interior.

\subsection{Related work}
Assume that $n$ sensors of identical range are all initially placed on a line. It was shown in \cite{MinMax} that there is an $O(n^2)$ algorithm for minimizing the max displacement of a sensor while the optimization problem becomes NP-complete if there are two separate (non-overlapping) barriers on the line (cf. also \cite{che2} for arbitrary sensor ranges). If the optimization cost is the sum of displacements then \cite{barrierMinSum} shows that the problem is NP-complete when arbitrary sensor ranges are allowed, while an $O(n^2)$ algorithm is given when all sensing ranges are the same.  Similarly, if one is interested in the number of sensors moved then the coverage problem is NP-complete when arbitrary sensor ranges are allowed, and an $O(n^2)$ algorithm is given when all sensing ranges are the same \cite{mon}.  Further, \cite{dobrev2013complexity} considers the algorithmic complexity of several natural generalizations of the barrier coverage problem with sensors of arbitrary ranges, including when the initial positions of sensors are arbitrary 
points in the two-dimensional plane, as well as multiple barriers that are parallel or perpendicular to each other.

An important setting in considerations for barrier coverage is when the sensors are placed at random on the barrier according to the uniform distribution. Clearly, when the sensor dispersal on the barrier is random then coverage depends on the sensor density and some authors have proposed using several rounds of random dispersal for complete barrier coverage \cite{eftekhari13,yan}. Another approach is to have the sensors relocate from their initial position to a new position on the barrier so as to achieve complete coverage 
\cite{MinMax,barrierMinSum,eftekhari13d,mon}. Further, this relocation may be done in a centralized (cf. \cite{MinMax,barrierMinSum}) or distributed manner (cf. \cite{eftekhari13d}).

Closely related to our work is \cite{spa_2013}, where
algorithm $MV_1(n,y)$ was analysed. In this paper, $n$ sensors were placed in the unit interval uniformly and independently at random and the cost of displacement was measured by the sum of the respective displacements of the individual sensors  in the unit line segment $[0,1].$
Lets call the positions $\frac{i}{n} - \frac{1}{2n}$, for $i = 1, 2, \ldots , n$, {\em anchor} positions. The sensors have the sensing radius $r=\frac{1}{2n}$ each.
Notice that the only way to attain complete coverage is for the sensors to occupy the anchor positions.
The following result was proved in \cite{spa_2013}.
\begin{theorem}[cf. \cite{spa_2013}]
\label{thm:spaa}
%\textit{
Assume that, $n$ mobile sensors are thrown uniformly and independently at random in the unit interval. 
The expected sum of displacements of all $n$ sensors to move from their current location to the equidistant anchor 
locations $\frac{i}{n} - \frac{1}{2n}$, for $i = 1, 2, \ldots , n$,  respectively, is in $\Theta(\sqrt{n}).$
%}
\end{theorem}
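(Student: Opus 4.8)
The plan is to reduce the problem to the order statistics of the $n$ uniform samples. Write $X_1,\dots,X_n$ for the random positions and $a_i = \frac{i}{n}-\frac{1}{2n} = \frac{2i-1}{2n}$ for the anchors. Since $r=\frac{1}{2n}$ forces every sensor onto an anchor, the cost is the minimum over bijections of a sum of distances on a line; an exchange (uncrossing) argument then shows the cheapest bijection is the order-preserving one, so that if $X_{(1)}\le\cdots\le X_{(n)}$ are the order statistics, the optimal total movement is $M=\sum_{i=1}^n |X_{(i)}-a_i|$. I would then use the standard fact $X_{(i)}\sim\BetaD{i}{n-i+1}$, whence $\E{X_{(i)}}=\frac{i}{n+1}=:\mu_i$ and $\var{X_{(i)}}=\frac{i(n-i+1)}{(n+1)^2(n+2)}=:\sigma_i^2$. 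Everything then rests on estimating $\E{|X_{(i)}-a_i|}$ and summing over $i$.

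For the upper bound I would invoke Jensen's inequality, $\E{|X_{(i)}-a_i|}\le\sqrt{\E{(X_{(i)}-a_i)^2}}=\sqrt{\sigma_i^2+(\mu_i-a_i)^2}$. The bias is negligible: a direct computation gives $|\mu_i-a_i| = \frac{|n+1-2i|}{2n(n+1)}=O(1/n)$, whose contribution to $\E{M}$ sums to $O(1)$. The main term is $\sum_{i=1}^n\sigma_i$, and since $\sigma_i$ is of order $\frac{\sqrt{i(n-i)}}{n^{3/2}}$, a Riemann-sum comparison yields
\[
\sum_{i=1}^n\sigma_i \;\approx\; \frac{1}{n^{3/2}}\int_0^n\sqrt{x(n-x)}\,dx \;=\; \sqrt{n}\int_0^1\sqrt{t(1-t)}\,dt \;=\; \frac{\pi}{8}\sqrt{n},
\]
so $\E{M}=O(\sqrt{n})$.

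The lower bound is the delicate part. The naive estimate $|X_{(i)}-a_i|\ge (X_{(i)}-a_i)^2$ (valid since both lie in $[0,1]$) only gives $\E{M}\ge\sum_i\sigma_i^2=\Theta(1)$, too weak by a factor of $\sqrt{n}$: squaring destroys precisely the scale we need. Instead I would prove a genuine $L^1$ (anticoncentration) bound $\E{|X_{(i)}-a_i|}\ge c\,\sigma_i$ for a constant $c>0$, restricted to the bulk indices $n/4\le i\le 3n/4$. For these indices $\sigma_i=\Theta(n^{-1/2})$ dominates the bias $|\mu_i-a_i|=O(n^{-1})$, and the normalized order statistic $(X_{(i)}-\mu_i)/\sigma_i$ is anticoncentrated (it obeys a central limit theorem, or one may lower-bound the mean absolute deviation of the Beta law about its median). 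Hence $P\big(|X_{(i)}-a_i|\ge\sigma_i/2\big)\ge c'>0$, giving $\E{|X_{(i)}-a_i|}\ge \frac{c'}{2}\sigma_i$; summing over the $\Theta(n)$ bulk indices reproduces the $\Omega(\sqrt{n})$ bound.

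The main obstacle is exactly this $L^1$ lower bound: one must certify that the Beta order statistics do not concentrate more tightly than their standard deviation suggests, uniformly over the bulk. A clean way to make this rigorous is to evaluate $\E{|X_{(i)}-a_i|}$ exactly through incomplete Beta integrals and extract the leading $\sigma_i$-order asymptotics, or alternatively to apply a Berry--Esseen estimate for order statistics. Given that, the two bounds match and $\E{M}=\Theta(\sqrt{n})$.
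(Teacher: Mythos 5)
First, note that the paper gives no proof of Theorem~\ref{thm:spaa}: it is imported from \cite{spa_2013} as a black box, so there is no internal argument to compare yours against. Judged on its own, your skeleton is the standard one and matches what the cited reference does: reduce to the order statistics $X_{(i)}\sim\mathrm{Beta}(i,n-i+1)$, observe that the bias $\left|\mathbf{E}[X_{(i)}]-a_i\right|=\frac{|n+1-2i|}{2n(n+1)}=O(1/n)$ is negligible, and show that $\mathbf{E}\left|X_{(i)}-a_i\right|$ is of the order of the standard deviation $\sigma_i\asymp \sqrt{i(n-i+1)}/n^{3/2}$, whose sum over $i$ is $\Theta(\sqrt{n})$. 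Your upper bound via Jensen/Cauchy--Schwarz is complete and correct. (The uncrossing argument is not even needed: the theorem, like Algorithm~\ref{alg_interval}, fixes the order-preserving assignment, so you may take $M=\sum_i\left|X_{(i)}-a_i\right|$ by definition.)

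The one genuine gap is the one you flag yourself: the lower bound $\mathbf{E}\left|X_{(i)}-a_i\right|\ge c\,\sigma_i$ uniformly over the bulk $n/4\le i\le 3n/4$ is asserted via ``anticoncentration / CLT'' but not proved, and without it you only reach the trivial $\Omega(1)$ bound. It can, however, be closed elementarily, without Berry--Esseen. For $Y=X_{(i)}$ with mean $\mu_i$ and variance $\sigma_i^2$, the $\mathrm{Beta}(i,n-i+1)$ law has kurtosis bounded by an absolute constant for bulk $i$ (its excess kurtosis is in fact $O(1/n)$ there), i.e.\ $\mathbf{E}\left[(Y-\mu_i)^4\right]\le C\sigma_i^4$. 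Applying Paley--Zygmund to the nonnegative variable $(Y-\mu_i)^2$ gives $\Pr\left[(Y-\mu_i)^2>\tfrac{1}{2}\sigma_i^2\right]\ge \tfrac{1}{4C}$, hence $\mathbf{E}\left|Y-\mu_i\right|\ge \frac{\sigma_i}{4C\sqrt{2}}$; since $|\mu_i-a_i|=O(1/n)=o(\sigma_i)$ in the bulk, the triangle inequality yields $\mathbf{E}\left|Y-a_i\right|\ge c'\sigma_i$. Summing over the $\Theta(n)$ bulk indices, each contributing $\Theta(n^{-1/2})$, gives the required $\Omega(\sqrt{n})$. With that step supplied your argument is a complete and correct proof; as written it is a sound plan whose decisive estimate is left unexecuted.
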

\begin{algorithm}
\caption{$MV_1(n,y)\,\,$ (Sensor displacement on a  interval).}
\label{alg_interval}
\begin{algorithmic}[1]
 \REQUIRE $n$ mobile sensors with identical sensing radius $r=\frac{y}{2n}$ placed uniformly and independently at random on the interval $[0,y].$
 \ENSURE  The final positions of the sensors are at the locations $\left(\frac{yi}{n}-\frac{y}{2n}\right),$
 $1\le i\le n$ (so as to attain coverage of the interval $[0,y].$)
 \STATE{sort the initial locations of sensors; the locations after sorting $x_1,x_2,\dots x_n,$  $x_1\le x_2\le\dots \le x_n.$}
 \FOR{$i=1$  \TO $n$ } 
% \STATE{ 
 %\FOR{$i=1$  \TO $\sqrt{n}$ } 
 \STATE{move the sensor $S_{i}$ at position $\left(\frac{yi}{n}-\frac{y}{2n}\right)$}
 %\ENDFOR
% } 
 \ENDFOR
 %\FOR{$j=1$  \TO $\sqrt{n}$ } 
 %\STATE{move sensors $S_{1+(j-1)\sqrt{n}}, S_{2+(j-1)\sqrt{n}},\dots,S_{\sqrt{n}+(j-1)\sqrt{n}}$ to equidistant points that are sufficient to cover the
 %rectangle with vertices
 %$
 %\left(\frac{j-1}{\sqrt{n}},0\right),$ $\left(\frac{j}{\sqrt{n}},0\right),$ $\left(\frac{j-1}{\sqrt{n}},1\right),
 %$ 
 %$\left(\frac{j}{\sqrt{n}},1\right)$ }
 %\ENDFOR
\end{algorithmic}
\end{algorithm}

In \cite{kkpower}, Theorem~\ref{thm:spaa} was extended to when the
cost of displacement is measured by the sum of the respective displacements 
raised to the power $a>0$ of the respective sensors in the unit line segment $[0,1].$
The following result was proved.
\begin{theorem}[cf. \cite{kkpower}]
\label{them:kkpower} 
Assume that $n$ mobile sensors are thrown uniformly and independently at random in the unit interval. 
The expected sum of displacements to a given power $a$ of algorithm $MV_1(n,1)$
%all $n$ sensors to move from their current location to the respective equidistant anchor positions $\frac{i}{n} - \frac{1}{2n}$, for $i = 1, 2, \ldots , n$, 
is in
$\Theta\left(1 / n^{\frac{a}{2}-1}\right),$ when $a$ is natural number, and in $O\left(1 /n^{\frac{a}{2}-1}\right)$, when $a>0.$
\end{theorem}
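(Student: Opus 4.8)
The plan is to reduce the total cost to a sum of one-sensor contributions and then control each contribution through the moments of a Beta order statistic. After the sort in Algorithm~\ref{alg_interval} (with $y=1$), the location of the $i$-th sensor is the $i$-th smallest of $n$ i.i.d.\ uniform samples $x_i$, whose law is $\BetaD{i}{n-i+1}$ with mean $\mu_i=\frac{i}{n+1}$ and variance $\var{x_i}=\frac{i(n+1-i)}{(n+1)^2(n+2)}$. Writing $c_i=\frac{2i-1}{2n}$ for the $i$-th anchor, linearity of expectation gives $\E{M_a}=\sum_{i=1}^n \E{|x_i-c_i|^a}$. Since $|\mu_i-c_i|=\frac{|n+1-2i|}{2n(n+1)}\le\frac{1}{2n}$, a convexity/subadditivity inequality separates each term into the central absolute moment $\E{|x_i-\mu_i|^a}$ and a bias of order $n^{-a}$; summed over $i$ the bias contributes only $O(n^{1-a})$, which is of lower order than the target $n^{1-a/2}$ for every $a>0$ and can be discarded.

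For the upper bound, valid for all real $a>0$, I would reduce the central absolute moment to an even \emph{integer} moment by the power-mean inequality: taking the least even integer $2m\ge a$, $\E{|x_i-\mu_i|^a}\le\bigl(\E{(x_i-\mu_i)^{2m}}\bigr)^{a/(2m)}$. The even central moments of $\BetaD{i}{n-i+1}$ are explicit polynomials in $i,n$ that scale like $(\var{x_i})^{m}$, so $\E{|x_i-\mu_i|^a}=O\bigl((\var{x_i})^{a/2}\bigr)$. It then remains to evaluate $\sum_{i=1}^n(\var{x_i})^{a/2}=\Theta\!\left(n^{-a/2}\sum_{i=1}^n\bigl(\tfrac{i}{n}(1-\tfrac{i}{n})\bigr)^{a/2}\right)$, and the inner sum is a Riemann sum for $n\int_0^1(t(1-t))^{a/2}\,dt=n\,\BetaF{\tfrac a2+1}{\tfrac a2+1}$, which yields $\E{M_a}=O(n^{1-a/2})$. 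Setting $a=1$ recovers $\Theta(\sqrt n)$ of Theorem~\ref{thm:spaa} as a consistency check.

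For the matching lower bound when $a$ is a natural number, the even case $a=2m$ is immediate because the exact central moment already gives $\E{|x_i-\mu_i|^{2m}}=\Omega((\var{x_i})^m)$. The odd case is the crux, since $\E{|x_i-\mu_i|^a}$ is no longer a polynomial moment. Here I would recover $\E{|x_i-\mu_i|^a}\ge c_a(\var{x_i})^{a/2}$ from log-convexity of the moment sequence (Hölder interpolation): writing the exponent $2$ as a convex combination of $a$ and a larger even integer $2k$ bounds $\var{x_i}=\E{(x_i-\mu_i)^2}$ above by a product of $\E{|x_i-\mu_i|^a}$ and the already-controlled moment $\E{(x_i-\mu_i)^{2k}}$, and rearranging gives the stated lower bound. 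Restricting the sum to the central band $i\in[n/4,3n/4]$, where $\BetaD{i}{n-i+1}$ is bounded away from the endpoints and $\var{x_i}=\Theta(1/n)$, then produces $\Omega(n\cdot n^{-a/2})=\Omega(n^{1-a/2})$.

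The step I expect to be the main obstacle is establishing that the standardized central moments $\E{(x_i-\mu_i)^{2m}}/(\var{x_i})^{m}$ are bounded by constants depending only on $m$, \emph{uniformly} in $i$ and $n$; this single fact powers both the power-mean reduction (upper bound) and the interpolation (lower bound). It is most delicate near the extreme ranks $i\approx 1$ and $i\approx n$, where $\BetaD{i}{n-i+1}$ is highly skewed and degenerates to an exponential-type law (e.g.\ $n\,x_1\Rightarrow\ExpD{1}$). I would verify the uniform bound either from the explicit recursion for Beta central moments or by the exponential-limit comparison at the extremes, noting that for the upper bound the extreme ranks are in any case cheap (each contributes only $O(n^{-a})$). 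Finally, the restriction to natural $a$ in the $\Theta$ claim reflects that the sharp two-sided estimate is cleanest to extract from exact integer-order moment formulas, whereas the inequalities above already deliver the $O$-bound for every real $a>0$.
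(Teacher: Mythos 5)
This statement is imported by the paper from \cite{kkpower} and stated without proof, so there is no in-paper argument to compare yours against; judged on its own terms, your outline is sound and essentially complete. The decomposition into bias plus central absolute moment is valid in both directions (for the lower bound you need the reverse form $\E{|x_i-c_i|^a}\ge 2^{1-a}\E{|x_i-\mu_i|^a}-|\mu_i-c_i|^a$, which is harmless on the central band where $\E{|x_i-\mu_i|^a}=\Theta(n^{-a/2})$ dominates the $O(n^{-a})$ bias). The one step you rightly flag as the crux --- that $\E{(x_i-\mu_i)^{2m}}\le C_m\var{x_i}^{m}$ uniformly in $i$ and $n$ --- is true: $\BetaD{i}{n-i+1}$ with both parameters at least $1$ is log-concave, and log-concave laws have uniformly bounded standardized moments of every order; alternatively a direct computation at the extreme ranks (e.g.\ $\E{x_1^{2m}}=(2m)!\,n!/(2m+n)!\sim(2m)!\,n^{-2m}$ against $\var{x_1}^m\sim n^{-2m}$) confirms it. Your Riemann-sum evaluation then gives the $O(n^{1-a/2})$ upper bound for all real $a>0$, Jensen gives the matching lower bound for $a\ge 2$, and the log-convexity interpolation handles $0<a<2$; in fact your interpolation argument nowhere uses that $a$ is an integer, so your route actually yields the two-sided $\Theta(n^{1-a/2})$ bound for every real $a>0$, slightly more than the theorem claims. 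By contrast, the source \cite{kkpower} proceeds through exact closed-form evaluation of the Beta integrals $\int_0^1|x-c_i|^a x^{i-1}(1-x)^{n-i}dx$ for integer $a$ (whence the restriction to natural $a$ in the $\Theta$ claim) together with a H\"older reduction for non-integer exponents, the same device this paper reuses in the proof of Theorem \ref{thm:3_4}; your moment-comparison argument is more probabilistic and arguably more robust, at the price of the uniform standardized-moment lemma.
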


An analysis similar to the one for the line segment was provided for the unit square in \cite{spa_2013}. 
Our present paper focuses on the analysis of sensor displacement for a group of sensors placed uniformly at random on the $d-$dimensional unit cube, thus also generalizing the results of \cite{KapelkoK15} from $d=2$ to arbitrary dimension $d \geq 2$. 
In particular, our approach is the first to generalize the results of \cite{spa_2013} to the $d-$dimensional unit cube using as cost metric the $a$-total movement, 
and also obtain sharper bounds for the case of the unit square.

\subsection{Preliminaries and notation}
Let $d$ be a natural number.
We define below the concept {\em d-Dimensional Cube Sensing Radius}
which refers to a coverage area having the shape of a $d$-dimensional cube.\footnote{Recall that the generally accepted coverage area of a sensor is a $d$-dimensional ball. 
Our results can be easily converted to this model by describing a minimum d-dimensional ball outside this $d$-dimensional cube.}

\label{subsec:preliminar}
\begin{Definition}[d-Dimensional Cube Sensing Radius]
Consider a sensor\\ 
$Z_{(x_1,x_2,\dots,x_d)}$ located in position $(x_1,x_2,\dots,x_d), $ where $0 \le x_1,x_2,\dots,x_d\le 1.$ 
We define the range of the sensor $Z_{(x_1,x_2,\dots, x_d)}$
to be the area delimited by the d-dimensional cube with the $2^d$ vertices
$(x_1\pm r,x_2\pm r,\dots x_d\pm r)$, and call $r$ the d-dimensional cube sensing radius of the sensor.
%$ $(x-a,y+a),$ $(x+a,y-a),$ $(x-a,y-a).$, 
%Then the square sensing radius of the sensor
%$Z$ is $a.$
\end{Definition}

We also define the cost measure {\em $a$-total movement} as follows. 
\begin{Definition}[$a$-total movement]
Let $a>0$ be a constant. Suppose the displacement of the $i$-th sensor is a distance $d_i$. The {\em $a$-total movement} is defined as the sum $M_a:= \sum_{i=1}^n d_i^a$. 
(We assume that, $r$ and $n$ are chosen so as to allow full coverage of the $d$-dimensional cube and $a>0$.)
\end{Definition}
Motivation for using this cost metric arises from the fact that there might be a terrain with obstacles that obstruct the sensor movement from their initial to their final destinations. 
Therefore the $a$-total movement is a more realistic metric than the one previously considered for $a=1$.

In the analysis below we consider the Beta distribution. 
We say that a random variable concentrated on the interval $[0,1]$ has the
$\EBeta{a}{b}$ distribution with parameters $a,b,$ if it has the probability density function
\begin{equation}
\label{eq:beta1}
f(x)=\frac{1}{\EBeta{a}{b}}x^{a-1}(1-x)^{b-1} ,
\end{equation}
where the Euler Beta function (see \cite{NIST})
\begin{equation}
\label{eq:beta2}
\EBeta{a}{b} = \int_0^1 x^{a-1}(1-x)^{b-1} dx
\end{equation}
is defined for all complex numbers $a,b$ 
such as $\Re(a)>0$ and $\Re(b)>0$. 
%We will also use the following basic identity
%$\EBeta{a}{b} = \Gamma(a)\Gamma(b)/\Gamma(a+b)$.
Let us notice that for any integer numbers $a,b\ge 0,$ we have
\begin{equation}
\label{Emult}
\EBeta{a}{b}^{-1}=\binom{a+b-1}{a}a.
\end{equation}

%\subsection{Related Results}
%\label{subsec:related}
\subsection{Results of the paper}
\label{subsec:contribution}

We consider $n$ mobile sensors with identical $d-$dimensional cube  sensing radius
$r$ placed independently at random with the uniform distribution in the $d-$dimensional unit cube $(d\ge 2).$  
We want to have the sensors move from their current location to positions that cover the  $d-$dimensional cube in the sense that every point in the $d-$dimensional cube
is within the range of at least one sensor.
When a sensor is displaced on the $d-$dimensional cube a distance equal to $d$  the cost of 
the displacement is $d^a$ for some (fixed) power
$a>0$ of the distance $d$ traveled. We assume that $r$ and $n$ are chosen so as to allow full coverage of the $d-$dimensional cube,
i.e., every point of the region
is within the range of at least one sensor.

The main contribution of the paper in Section~\ref{sec:2d} is to show the existence of a trade off between $d-$dimensional cube  sensing radius and $a$-total movement 
that can be summarized as follows: 
%The main contributions of the paper can be summarized as follows:
\begin{enumerate}
%In this section we recall some known facts about special functions and special numbers which will be useful in the analysis
%in the next sections. 
\item
For the case of the $d-$dimensional cube  sensing radius $\frac{1}{2n^{1/d}}$ and $n=m^d$ for some $m\in N$
we present an algorithm that uses
$O\left(n^{1-\frac{a}{2d}}\right)$ total expected movement 
(see Algorithm \ref{alg_3_4} and Theorem \ref{thm:3_4}).
\item
If the $d-$dimensional cube sensing radius is greater than
$\frac{3^{3/d}}{(3^{1/d}-1)(3^{1/d}-1)}\frac{1}{2n^{1/d}}$ and $n$ is a natural number
then the expected movement is 
$O\left(n^{1-\frac{a}{2d}}\left(\frac{\ln n}{n}\right)^{\frac{a}{2d}}\right)$ 
(see Algorithm \ref{alg_log} and Theorem \ref{thm:log}).
\end{enumerate}
Notice that, for $a=d$ Algorithm $MV_d(n,1)$ uses 
$O(\sqrt{n})$ total expected movement while Algorithm $LV_d(n)$
uses $O(\sqrt{\ln n})$ total expected movement.
Therefore this sharp decrease from $O(n^{1- \frac{a}{2d}})$ to $O(n^{1-\frac{a}{2d}} (\ln n)^{\frac{a}{2d}} )$
in the $a$-total movement of the sensors to attain complete coverage of the $d-$dimensional cube
indicates the presence of an interesting threshold on the $d-$dimensional cube sensing radius when it increases from $\frac{1}{2n^{1/d}}$ to 
$ \frac{3^{3/d}}{(3^{1/d}-1)(3^{1/d}-1)}\frac{1}{2n^{1/d}}$.

In Section~\ref{sec:simu} we simulate Algorithm \ref{alg_3_4} and provide the results of the simulations. Finally, Section~\ref{sec:conclusion} is the conclusion.
\section{Displacement in $d-$dimensional cube}
\label{sec:2d}

%To simplify proofs we will assume without loss of generality that the number $n$
%is itself the square of a natural number and let $0<a\le 4$.
Assume that $n$ mobile sensors with the same $d-$dimensional cube  sensing radius are thrown uniformly
at random and independently in the $d-$dimensional unit cube $[0,1]^d$. Let $a>0$ and $d\ge 2.$

Our first result is an upper bound on the expected $a-$total movement
for the case, where the $d-$dimensional cube sensing radius
is $\frac{1}{2n^{1/d}}.$ Observe that in this case
the only way for the sensors to attain complete coverage
of the $d-$dimensional unit cube is to occupy the positions
$$
\left(\frac{l_1}{n^{1/d}}-\frac{1}{2n^{1/d}},
\frac{l_2}{n^{1/d}}-\frac{1}{2n^{1/d}},
\dots ,\frac{l_d}{n^{1/d}}-\frac{1}{2n^{1/d}}
\right),
$$
for $1\le l_1,l_2,\dots , l_d\le n^{1/d}$ and $l_1,l_2,\dots,l_d\in N.$ Let us also notice
that $n=m^d$ for some $m\in N.$
We present a recursive algorithm $MV_d(n,1)$ that uses 
$O\left(n^{1-\frac{a}{2d}}\right)$ expected $a-$total movement.

The algorithm is in two-phases. 
During the first phase (see steps $(1-6)$) we apply a greedy strategy 
and move all the sensors
only according to the first coordinate. 
Figure \ref{fig:idea} illustrates the steps $(1-6)$ of Algorithm $MV_2(n,1)$
\begin{figure}
  \begin{center}
    \includegraphics[width=0.75\textwidth]{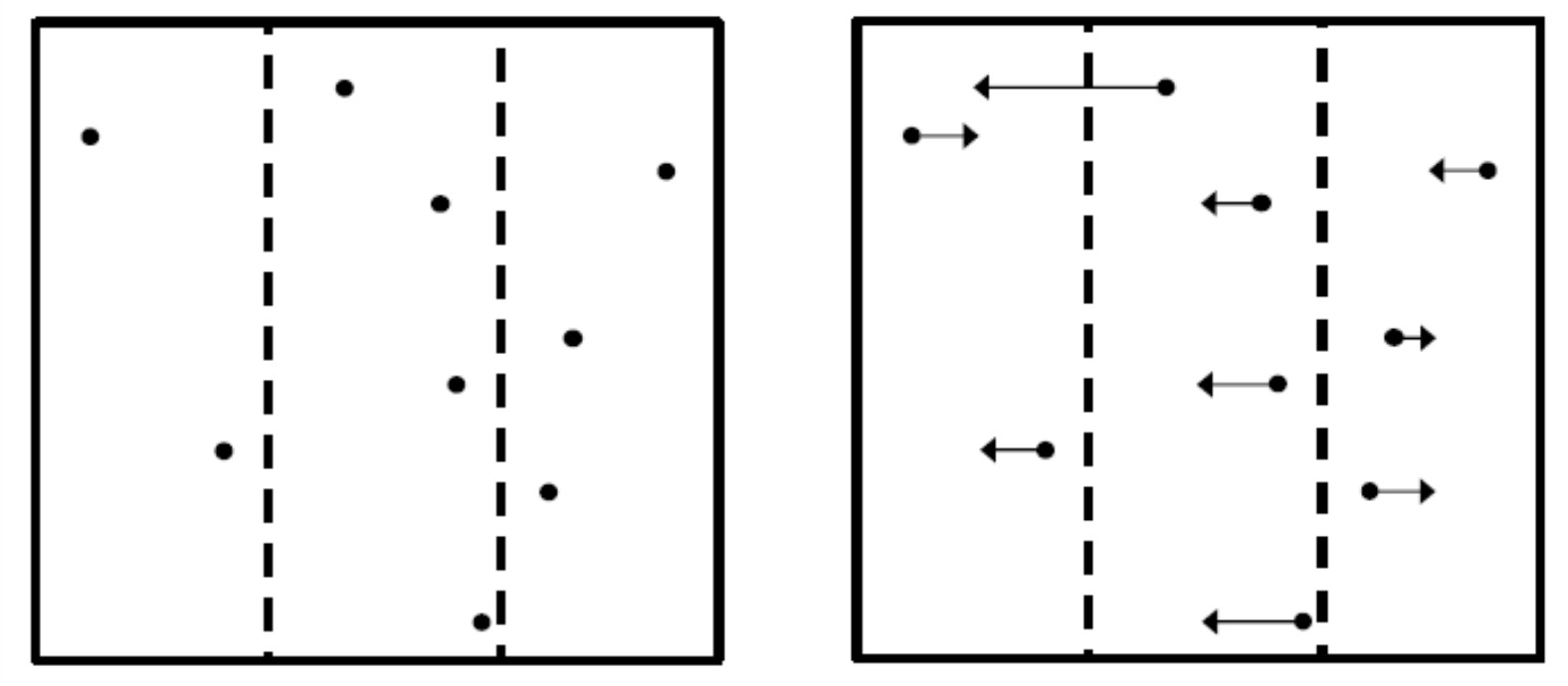}
  \end{center}
  \caption{Nine mobile sensors located in the interrior of a unit square move to new positions according to steps $(1-6)$ of Algorithm $MV_2(9,1)$.}
    \label{fig:idea}
\end{figure}
As a result of the first phase we have $n^{1/d}$ $(d-1)$-dimensional cubes each with $n^{\frac{d-1}{d}}$ random sensors. 
Hence the first phase reduces the sensor movement on the unit $d$-dimensional cube to the sensor movement on the unit $(d-1)$-dimensional cube.
During the second phase (see steps $(7-9)$) we move sensors in the unit  $(d-1)$-dimensional cube. Notice that for the base case $d=1$ we execute
algorithm $MV_1(n,1).$
%Theorem \ref{them:kkpower} [cf. \cite{kkpower}].  
\begin{algorithm}
\caption{$MV_d(n,y)\,\,$ Sensor displacement on a $d$-dimensional cube  when $n$ is $d$th power of the natural number and $d\ge 2.$}
\label{alg_3_4}
\begin{algorithmic}[1]
 \REQUIRE $n$ mobile sensors with identical $d$-dimensional cube sensing radius $r=\frac{1}{2n^{1/d}}$ placed uniformly and independently at random on 
 the $d-$dimensional cube  $[0,y]^d$
 \ENSURE  The final positions of the sensors are at the locations $\left(\frac{yl_1}{n^{1/d}}-\frac{y}{2n^{1/d}},\frac{yl_2}{n^{1/d}}-\frac{yl_2}{2n^{1/d}},\dots ,
 \frac{yl_d}{n^{1/d}}-\frac{yl_d}{2n^{1/d}}\right),$
 $1\le l_1,l_2,\dots , l_d\le n^{1/d}$\\ and $l_1,l_2,\dots, l_d\in N$ (so as to attain coverage of $[0,y]^d$)
 \STATE{Sort the initial locations of sensors according to the first coordinate; the locations \\
 after sorting $S_1=(x_1(1), x_2(1), \dots , x_d(1)), 
 S_2=(x_1(2), x_2(2), \dots ,x_d(2)),\dots$ \\
 $S_n=(x_1(n), x_2(n), \dots , x_d(n)),\,\,\,$  $x_1(1)\le x_1(2)\le\dots \le x_1(n);$}
 \FOR{$j=1$  \TO $n^{1/d}$ } 
% \STATE{ 
 \FOR{$i=1$  \TO $n^{(d-1)/d}$ } 
 \STATE{Move the sensor $S_{i+(j-1)n^{1/d}}$ at position\\ 
 $\left(\frac{jy}{n^{1/d}}-\frac{y}{2n^{1/d}},x_{i+(j-1)n^{1/d}}(2),\dots ,
 x_{i+(j-1)n^{1/d}}(d)\right);$}
 \ENDFOR
% } 
 \ENDFOR
 \FOR{$j=1$  \TO $n^{1/d}$ } 
 \STATE{Execute $MV_{d-1}(n^{(d-1)/d},y)$ for sensors\\ 
 $S_{1+(j-1)n^{1/d}}, S_{2+(j-1)n^{1/d}},\dots,S_{n^{(d-1)/d}+(j-1)n^{1/d}};$} 
 \ENDFOR
\end{algorithmic}
\end{algorithm}

We prove the following theorem.
\begin{theorem}
\label{thm:3_4}
Fix $d\in N.$
Let $n=m^d$ for some $m\in N$ and  let $a>0$.
Assume that $n$ sensors of d-dimensional cube sensing 
radius equal to $\frac{1}{2n^{1/d}}$ are thrown randomly and uniformly and 
independently with the uniform distribution on a unit d-dimensional cube. 
The expected $a-$total movement of algorithm $MV_d(n,1)$
is in $O\left(n^{1-\frac{a}{2d}}\right).$
\end{theorem}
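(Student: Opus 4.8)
The plan is to prove the bound by induction on the dimension $d$, exploiting the recursive two-phase structure of $MV_d(n,1)$. Write $M_a(d,n)=\sum_{i=1}^n d_i^a$ for the $a$-total movement of $MV_d(n,1)$, where $d_i$ is the displacement of the $i$-th sensor. For the base case $d=1$ the algorithm is exactly $MV_1(n,1)$, so Theorem~\ref{them:kkpower} already gives $\E{M_a(1,n)}=O(n^{1-a/2})$, which agrees with $O(n^{1-a/(2d)})$ at $d=1$. For the inductive step I would first decompose the cost: since phase~1 moves a sensor only in its first coordinate (by a distance $u_i$) while the recursive phase~2 moves it only in the remaining $d-1$ coordinates (by a distance $v_i$), the total displacement obeys $d_i\le u_i+v_i$. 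Combining this with $(u+v)^a\le C_a(u^a+v^a)$ for $C_a=\max(1,2^{a-1})$ yields
\begin{equation*}
M_a(d,n)\;\le\;C_a\left(\sum_{i=1}^n u_i^a+\sum_{i=1}^n v_i^a\right),
\end{equation*}
so it suffices to bound the expected phase-1 cost and the expected phase-2 cost separately.

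The key structural observation for phase~2 is that, because the $d$ coordinates of each sensor are independent, the permutation induced by sorting on the first coordinate is independent of the vectors of remaining coordinates. Hence, after phase~1 partitions the sensors into $n^{1/d}$ columns of $N:=n^{(d-1)/d}$ sensors each, the coordinates $(x_2,\dots,x_d)$ of the sensors inside any fixed column are still i.i.d.\ uniform on $[0,1]^{d-1}$. Therefore each column is a fresh instance of the $(d-1)$-dimensional problem with $N$ sensors and the \emph{same} sensing radius $\tfrac{1}{2N^{1/(d-1)}}=\tfrac{1}{2n^{1/d}}$, so the expected phase-2 cost equals $n^{1/d}\,\E{M_a(d-1,N)}$. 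Applying the inductive hypothesis $\E{M_a(d-1,N)}=O(N^{1-a/(2(d-1))})$ and simplifying the exponent,
\begin{equation*}
n^{1/d}\left(n^{(d-1)/d}\right)^{1-\frac{a}{2(d-1)}}=n^{\frac1d+\frac{d-1}{d}-\frac{a}{2d}}=n^{1-\frac{a}{2d}},
\end{equation*}
which is exactly the target order; so the recursive term alone already produces the claimed bound.

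It remains to check that phase~1 is of no larger order. Here I would use that the first coordinates are $n$ i.i.d.\ uniform samples, so the $k$-th smallest is distributed as $\BetaD{k}{n-k+1}$ with mean $\mu_k=\frac{k}{n+1}$ and standard deviation $O(1/\sqrt n)$. Writing $c_j=\frac{2j-1}{2n^{1/d}}$ for the anchor of column $j$ and splitting $|x_{(k)}-c_j|^a\le C_a\left(|x_{(k)}-\mu_k|^a+|\mu_k-c_j|^a\right)$, the expected phase-1 cost is at most a constant times $\sum_k\E{|x_{(k)}-\mu_k|^a}+\sum_{j,k}|\mu_k-c_j|^a$. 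The stochastic sum is $O(n^{1-a/2})$, using $\E{|x_{(k)}-\mu_k|^a}=O(n^{-a/2})$ together with $\sum_k (p_k(1-p_k))^{a/2}=O(n)$ for $p_k=k/n$; the deterministic sum is $O(n^{1-a/d})$, because inside each column the offsets $|\mu_k-c_j|$ range over an arithmetic progression on $[0,\tfrac{1}{2n^{1/d}}]$ of spacing $1/n$. For $d\ge2$ both exponents are at most $1-a/(2d)$, so phase~1 is dominated by the recursive term and the induction closes with $O(n^{1-a/(2d)})$.

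I expect the main obstacle to be the stochastic estimate $\E{|x_{(k)}-\mu_k|^a}=O(n^{-a/2})$ \emph{uniformly in $k$ and for arbitrary real $a>0$}: the absolute central moment of a $\BetaD{k}{n-k+1}$ variable must be controlled by the $a$-th power of its standard deviation, which is precisely where the Euler Beta-function identities of Section~\ref{subsec:preliminar} (in particular~\eqref{Emult}) enter, and where the passage from integer to real $a$ requires care. The bookkeeping of the constants $C_a$ across the $d$ levels of recursion is harmless since $a$ and $d$ are fixed; one only needs to observe that the inductive constant $K_d=C_a\bigl(K_{d-1}+O(1)\bigr)$ remains finite for each fixed $d$.
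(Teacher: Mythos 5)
Your proposal follows essentially the same route as the paper's proof: induction on $d$, with phase~2 handled by the inductive hypothesis via the identical exponent computation $n^{1/d}\left(n^{(d-1)/d}\right)^{1-\frac{a}{2(d-1)}}=n^{1-\frac{a}{2d}}$, and phase~1 split into a stochastic order-statistic deviation term of order $n^{1-a/2}$ plus a deterministic anchor-offset term of order $n^{1-a/d}$, both dominated by the recursive term for $d\ge 2$. The only divergence is that the paper obtains the stochastic term by comparing to the one-dimensional anchors and citing Theorem~\ref{them:kkpower} (handling $0<a<1$ by a H\"older reduction to $a=1$), whereas you propose to prove the Beta-moment bound $\E{|x_{(k)}-\mu_k|^a}=O(n^{-a/2})$ directly --- the estimate you flag as the remaining obstacle is exactly what that citation supplies.
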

\begin{proof}
We will prove the statement of the theorem by mathematical induction. Observe that the base case for $d=1$ follows from 
Theorem \ref{them:kkpower} [cf. \cite{kkpower}].  
%We will present only the idea of proof without all technical details.
%Without loss of generality let $n$ sensors be at the initial locations $(x_1, y_1),$ $(x_2,y_2),$ $\dots (x_n,y_n)$ such that $x_1<x_2<\dots x_n.$
Let us assume the result holds for the number $d-1.$ Let $a>0.$
We will estimate the expected $a-$total movement at the steps $(1-6).$ 
Let $X_i$ be the $i$th order statistic, i.e., the position of the $i$th sensor in the interval $[0,1]$ after sorting in step $(1)$. 
It turns out (see \cite{nagaraja_1992}) that $X_i$ obeys the Beta distribution with parameters $i, n-i+1$.
We know that the density function for $X_i$ (see Equation~(\ref{eq:beta1})) is
$$f_{X_i}(x)=i\binom{n}{i}x^{i-1}(1-x)^{n-i}.$$
Therefore, the  expected $a-$total movement in steps $(1-6)$
of algorithm $MV_d(n,1)$
is equal to
$$E_{(1-6)}^{(a)}=\sum_{j=1}^{n^{1/d}}\sum_{i=(j-1)n^{(d-1)/d}+1}^{jn^{(d-1)/d}}i\binom{n}{i}\int_{0}^{1}\left|x-\left(\frac{j}{n^{1/d}}-\frac{1}{2n^{1/d}}\right)\right|^a
x^{i-1}(1-x)^{n-i}dx.$$
Notice that, the expected $a-$total movement of algorithm $MV_1(n,1)$
is equal to
\begin{align*}
D^{(a)}&=\sum_{i=1}^{n}
i\binom{n}{i}\int_0^1\left|x-\left(\frac{i}{n}-\frac{1}{2n}\right)\right|^ax^{i-1}(1-x)^{n-i}dx\\
&=
%\sum_{j=1}^{\sqrt{n}}\sum_{i=(j-1)\sqrt{n}+1}^{j\sqrt{n}}
\sum_{j=1}^{n^{1/d}}\sum_{i=(j-1)n^{(d-1)/d}+1}^{jn^{(d-1)/d}}
i\binom{n}{i}\int_0^1\left|x-\left(\frac{i}{n}-\frac{1}{2n}\right)\right|^ax^{i-1}(1-x)^{n-i}dx.
\end{align*}
According to Theorem \ref{them:kkpower} [cf. \cite{kkpower}]   
\begin{equation}
\label{eq:estimate1}
D^{(a)}=O(n^{1-\frac{a}{2}}),
\end{equation}
when $a>0.$

Firstly, we estimate $E_{(1-6)}^{(a)},$  when  $a\ge 1.$ % Let $i=(j-1)\sqrt{n}+k,$ where $k\in\{1,2,\dots ,\sqrt{n}\}.$
Notice that
\begin{equation}
 \label{eq:minkow}
|y+z|^a\le(|y|+|z|)^a \le 2^{a-1}(|y|^a+|z|^a)\,\,\, \text{for}\,\,\, a\ge 1, y,z\in R.
\end{equation}
This inequality is the consequence of the fact that $f(x)=x^a$ is convex over $R_+$ for $a\ge 1.$
Using Inequality (\ref{eq:minkow}) for $y=x-\left(\frac{i}{n}-\frac{1}{2n}\right)$ and 
$z=\left(\frac{i}{n}-\frac{1}{2n}\right)-\left(\frac{j}{n^{1/d}}-\frac{1}{2n^{1/d}}\right)$ 
we get
\begin{align}
%$$
& \notag
\left|x-\left(\frac{j}{n^{1/d}}-\frac{1}{2n^{1/d}}\right)\right|^a \\
%$$
&\leq \label{eq:estimate2}
2^{a-1}\left(\left|x-\left(\frac{i}{n}-\frac{1}{2n}\right)\right|^a+\left|\frac{i}{n}-\frac{j}{n^{1/d}}+\frac{1}{2n^{1/d}}-\frac{1}{2n}\right|^a\right).
\end{align}
We apply the definition of the Beta function (see Equation~(\ref{eq:beta2})) with parameters $i, n-i+1$, as well as Equation (\ref{Emult}) to deduce that
$$
\sum_{j=1}^{n^{1/d}}\sum_{i=(j-1)n^{(d-1)/d}+1}^{jn^{(d-1)/d}}
i\binom{n}{i}\int_0^1\left|\frac{i}{n}-\frac{j}{n^{1/d}}+\frac{1}{2n^{1/d}}-\frac{1}{2n}\right|^ax^{i-1}(1-x)^{n-i}dx
$$
$$
\sum_{j=1}^{n^{1/d}}\sum_{k=1}^{n^{(d-1)/d}}
\left|\frac{(j-1)n^{(d-1)/d}+k}{n}-\frac{j}{n^{1/d}}+\frac{1}{2n^{1/d}}-\frac{1}{2n}\right|^a
$$
\begin{equation}
\label{eq:estt}
=\sum_{j=1}^{n^{1/d}}\sum_{k=1}^{n^{(d-1)/d}}
\left|\frac{k}{n}-\frac{1}{2n^{1/d}}-\frac{1}{2n}\right|^a
\le\sum_{j=1}^{n^{1/d}}\sum_{k=1}^{n^{(d-1)/d}}
\left(\frac{1}{2n^{1/d}}\right)^a=\frac{n^{1-\frac{a}{d}}}{2^a}.
\end{equation}
Putting together Formulas (\ref{eq:estimate1}), (\ref{eq:estimate2}), and (\ref{eq:estt}) we obtain
\begin{equation}
\label{eq:estimate3}
E_{(1-6)}^{(a)}=O(n^{1-\frac{a}{2}})+O(n^{1-\frac{a}{d}})=O(n^{1-\frac{a}{d}}),\,\,\,\text{when}\,\,\, a\ge 1, d\ge 2
\end{equation}
To estimate $E_{(1-6)}^{(a)},$ when $0<a<1$ we define
$$F_{(i,j)}^{(a)}=i\binom{n}{i}\int_{0}^{1}\left|x-\left(\frac{j}{n^{1/d}}-\frac{1}{2n^{1/d}}\right)\right|^a
x^{i-1}(1-x)^{n-i}dx.$$
Observe that,
$$
E_{(1-6)}^{(a)}=\sum_{j=1}^{n^{1/d}}\sum_{i=(j-1)n^{(d-1)/d}+1}^{jn^{(d-1)/d}}F_{(i,j)}^{(a)}.
$$
Then, we use the discrete H\"older inequality 
with parameters $\frac{1}{a}$ and $\frac{1}{1-a}$
to derive
$$
\sum_{j=1}^{n^{1/d}}\sum_{i=(j-1)n^{\frac{d-1}{d}}+1}^{jn^{\frac{d-1}{d}}}
F^{(a)}_{(i,j)} \le
$$
$$
%&\le& \notag
\left(\sum_{j=1}^{n^{1/d}}\sum_{i=(j-1)n^{\frac{d-1}{d}}+1}^{jn^{\frac{d-1}{d}}}\left(F^{(a)}_{(i,j)}\right)^{\frac{1}{a}}\right)^{\frac{a}{1}}
\left(\sum_{j=1}^{n^{1/d}}\sum_{i=(j-1)n^{\frac{d-1}{d}}+1}^{jn^{\frac{d-1}{d}}}1\right)^{1-a}
$$
\begin{equation}
\label{eq:holder2}
\left(\sum_{j=1}^{n^{1/d}}\sum_{i=(j-1)n^{(d-1)/d}+1}^{jn^{(d-1)/d}}\left(F^{(a)}_{(i,j)}\right)^{\frac{1}{a}}\right)^{a}
(n)^{1-a}.
\end{equation}
Next, we use H\"older inequality for integrals with parameters $\frac{1}{a}$ and $\frac{1}{1-a}$ and get
$$\int_0^1\left|x-\left(\frac{j}{n^{1/d}}-\frac{1}{2n^{1/d}}\right)\right|^ax^{i-1}(1-x)^{n-i}i\binom{n}{i}dx\le
$$
$$
\left(\int_0^1\left(\left|x-\left(\frac{j}{n^{1/d}}-\frac{1}{2n^{1/d}}\right)\right|^a\right)^{\frac{1}{a}}x^{i-1}(1-x)^{n-i}i\binom{n}{i}dx\right)^{\frac{a}{1}},
$$
so $\left(F^{(a)}_{(i,j)}\right)^{\frac{1}{a}}\le F^{(1)}_{(i,j)}.$
Putting together Equation~\eqref{eq:estimate3} and Equation~\eqref{eq:holder2} we obtain
\begin{equation}
\label{eq:estimate4}
E_{(1-6)}^{(a)}=O\left(n^{1-\frac{a}{d}}\right),\,\,\, \text{when}\,\,\,0<a<1.
\end{equation}
%Next we use H\"older inequality and conclude that the total displacement  at the steps (1-5) equals
%$O(\sqrt{n}).$

Observe that in step $(8)$ of algorithm $MV_d(n,1)$
we have that $n^{\frac{d-1}{d}}$ mobile sensors are 
thrown uniformly and indendently at random in the unit $(d-1)$-dimensional cube.
According to inductive assumption the  
expected $a-$total movement at the step (8) is equal $O((n^{\frac{d-1}{d}})^{1-\frac{a}{2(d-1)}}).$
Hence the expected $a-$total movement in steps $(7-9)$
is in $O(n^{1/d}(n^{\frac{d-1}{d}})^{1-\frac{a}{2(d-1)}})=O(n^{1-\frac{a}{2d}}).$ 
Notice that the  
expected $a-$total movement in steps (1-6) is equal $O(n^{1-\frac{a}{d}})$ (see Formula (\ref{eq:estimate3})
and  Formula (\ref{eq:estimate4})).
Therefore, the expected cost of displacement to power $a$ of algorithm $MV_d(n,1)$ is in $O(n^{1-\frac{a}{2d}})$. This gives the claimed estimate
for $d$ and completes the proof of Theorem
\ref{thm:3_4}. %\qed
\end{proof}

Now we study a lower bound on the total displacement, when the $d-$dimensional cube sensing radius of the sensors is larger than $\frac{1}{2n^{1/d}}$.
First, we give a lemma which indicates how to scale the results of Theorem \ref{thm:3_4}
to $d-$dimensional cube of arbitrary length. The following lemma states that  Algorithm $MV_d(n,y)$ uses 
$O\left(y^an^{1-\frac{a}{2d}}\right)$
expected $a-$total movement.

\begin{Lemma}
\label{lem:scale}
Fix $d\in N.$
Let $n=m^d$ for some $m\in N$ and  let $a>0$.
Assume that $n$ sensors of d-dimensional cube sensing 
radius equal to $\frac{y}{2n^{1/d}}$ are thrown randomly and uniformly and 
independently with the uniform distribution on the $[0,y]^d.$ 
The expected $a-$total movement of algorithm $MV_d(n,y)$
is in $O\left(y^an^{1-\frac{a}{2d}}\right).$
\end{Lemma}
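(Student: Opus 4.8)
The plan is to reduce the $[0,y]^d$ problem directly to the unit-cube problem of Theorem~\ref{thm:3_4} by a scaling (coupling) argument, rather than repeating the induction. The key observation is that the linear map $\phi\colon[0,1]^d\to[0,y]^d$ given by $\phi(x)=yx$ is a bijection that carries the uniform distribution on $[0,1]^d$ to the uniform distribution on $[0,y]^d$, multiplies all Euclidean distances by $y$, and sends the unit-cube anchor grid $\left(\frac{l_1}{n^{1/d}}-\frac{1}{2n^{1/d}},\dots\right)$ to the anchor grid $\left(\frac{yl_1}{n^{1/d}}-\frac{y}{2n^{1/d}},\dots\right)$ prescribed by Algorithm~\ref{alg_3_4} with parameter $y$; likewise the sensing radius $\frac{1}{2n^{1/d}}$ scales to $\frac{y}{2n^{1/d}}$.

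First I would set up the coupling: given a random placement $P=(p_1,\dots,p_n)$ of $n$ sensors uniformly on $[0,y]^d$, the preimages $\phi^{-1}(p_i)=p_i/y$ form a uniform random placement on $[0,1]^d$. Second, I would verify that Algorithm $MV_d(n,y)$ run on $P$ produces exactly the image under $\phi$ of the output of Algorithm $MV_d(n,1)$ run on $\phi^{-1}(P)$, because both phases of the algorithm commute with $\phi$. In the first phase the sort in step~$(1)$ depends only on the order of the first coordinates, which the monotone map $t\mapsto yt$ preserves, and the step-$(3)$ targets differ by the factor $y$; in the second phase the recursive calls $MV_{d-1}(n^{(d-1)/d},y)$ act on the $(d-1)$-dimensional slices $[0,y]^{d-1}$, so the same commutation holds by the recursive structure, bottoming out at $MV_1$, which is again just a sort followed by placement at the scaled anchors.

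Consequently each sensor's displacement under $MV_d(n,y)$ equals $y$ times the corresponding displacement under $MV_d(n,1)$, so if $d_i$ denotes the latter then the former is $y\,d_i$, and the realized $a$-total movement satisfies $M_a^{(y)}=\sum_{i=1}^n (y d_i)^a=y^a\sum_{i=1}^n d_i^a=y^a M_a^{(1)}$ pointwise on the coupled sample space. Taking expectations and invoking Theorem~\ref{thm:3_4} gives $\E{M_a^{(y)}}=y^a\,\E{M_a^{(1)}}=y^a\,O(n^{1-\frac{a}{2d}})=O\!\left(y^a n^{1-\frac{a}{2d}}\right)$, which is the claim. The only point requiring care — the \emph{obstacle}, though a mild one — is the second step: confirming that the matching produced by the algorithm is genuinely scale-invariant through the recursion, i.e.\ that $\phi$ intertwines the two runs at every level $d, d-1, \dots, 1$. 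Once this equivariance is established the distances, and hence the cost, scale by the clean factor $y^a$ and the rest is immediate.
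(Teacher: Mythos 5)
Your proposal is correct and follows essentially the same route as the paper: the paper's own proof is exactly the rescale-to-the-unit-cube argument (map the configuration by $x \mapsto x/y$, apply Theorem~\ref{thm:3_4}, and scale back), merely stated in two lines. Your write-up just makes explicit the equivariance of the algorithm under the scaling map and the resulting pointwise identity $M_a^{(y)} = y^a M_a^{(1)}$, which the paper leaves implicit.
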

\begin{proof}
Assume that, $n$ sensors are in the cube $[0,y]^d.$ Then, multiply their coordinates by $1/y.$ From Theorem \ref{thm:3_4} 
the expected $a-$total movement in the unit cube $[0,1]^d$ is in $O\left(n^{1-\frac{a}{2d}}\right).$ Now by multiplying their coordinates by $y$
we get the result in the statement of the lemma. 
\end{proof}
A natural question to ask is: how to exploit the proposed Algorithm $MV_d(n,1)$ %and Algorithm \ref{alg_3_4_scale}
when the number
$n$ of nodes is not a $d-$th power  of natural number. Assume that  $n$ sensors have the $d-$dimensional cube sensing 
radius $r=\frac{f}{2n^{1/d}}$ and $f\ge\frac{n^{1/d}}{\lfloor n^{1/d}\rfloor}.$ 
To attain coverage of the cube $[0,1]^d$ choose $\lfloor n^{1/d}\rfloor^d$ sensors at  random and use Algorithm $MV_d(n,1)$ for the
chosen sensors. Then similar arguments hold for Algorithm $MV_d(n,y).$

Notice that for $f\ge \frac{3^{3/d}}{(3^{1/d}-1)(3^{1/d}-1)}$ we can do better.
The following theorem states that Algorithm $LV_d(n)$ uses $O\left(n^{1-\frac{a}{2d}}\left(\frac{\ln n}{n}\right)^{\frac{a}{2d}}\right)$
expected $a-$total movement. 

\begin{algorithm}
\caption{$LV_d(n)\,\,$ Sensor displacement on a unit $d$-dimensional cube  when $d\ge 2,$
\,\,\,\, $p=\frac{9}{4}\left(2+\frac{a}{d}\right),\,\,$ $A=\frac{3}{4}\left(2+\frac{a}{d}\right),$
$x_0$ is the real solution of the equation $\frac{x}{\frac{9}{4}(2+\frac{a}{d})\ln x}=3$ such that $x_0\ge 3$
}
\label{alg_log}
\begin{algorithmic}[1]
 \REQUIRE $n\ge \lceil x_0 \rceil$ mobile sensors with identical square sensing radius $r=\frac{f}{2 n^{1/d}},\,\,$ 
 %$f^2>\frac{n}{\lfloor\sqrt{n}\rfloor^2}$ 
 $f\ge \frac{3^{3/d}}{(3^{1/d}-1)(3^{1/d}-1)}$
 placed uniformly and independently at random on the cube $[0,1]^d.$
 \ENSURE  The final positions of sensors to attain coverage of the cube $[0,1]^d$
 %\STATE Choose parameter $p$ and $A$ such that %$p>\left(\frac{1}{\sqrt{\frac{1}{2+2\left(1-\frac{a}{2d}\right)}}-\sqrt{\frac{\ln n}{n}}}\right)^2,$ 
 %$p=\frac{9}{4}\left(2+\frac{a}{d}\right),\,\,$ $A=\frac{3}{4}\left(2+\frac{a}{d}\right).$
 %$\left\lceil\sqrt{\frac{p\ln n}{f^2}}\,\right\rceil^2\le\frac{n}{\left\lceil\sqrt{\frac{n}{p\ln n}}\right\rceil^2}-\sqrt{\frac{\left(2+2\left(1-\frac{a}{2d}\right)\right)n\ln n }{\left\lceil\sqrt{\frac{n}{p\ln n}}\right\rceil^2}}$
 %$f^2>\frac{p}{p-\sqrt{\left(2+2\left(1-\frac{a}{2d}\right)\right)p}}$ and $\frac{p\ln n}{f^2}=k^2$   for some $k\in N.$
  \STATE Divide the $d-$dimensional unit cube into $d-$dimensional subcubes of side $\frac{1}{\left\lfloor\left(\frac{n}{p\ln n}\right)^{1/d}\right\rfloor}$;
  \IF{there is a $d$-dimensional subcube with fewer than 
  $\frac{1}{3}\frac{n}{\left\lfloor\left(\frac{n}{p\ln n}\right)^{1/d}\right\rfloor^d}$
 sensors}
  \STATE{choose $\lfloor n^{1/d}\rfloor^d$ sensors at random;} 
  \STATE{use Algorithm $MV_d(n,1)$ that moves all
  $n:=\lfloor n^{1/d}\rfloor^d$ sensors to equidistant points that are sufficient to cover the $d$-dimensional subcube;}
  \ELSE 
   \STATE {In each $d$-dimensional subcube choose $\left\lfloor\left(A\ln n\right)^{1/d}\right\rfloor^d$ 
  %($\frac{p\ln n}{f(n)}<(p-\sqrt{(2+b)p})\ln n$ since $f(n)>\frac{p}{p-\sqrt{(2+b)p}}$) 
  sensors at random and use Algorithm $MV_d(n,y)$ with $n:=\left\lfloor\left(A\ln n\right)^{1/d}\right\rfloor^d, 
  y:=\frac{1}{\left\lfloor\left(\frac{n}{p\ln n}\right)^{1/d}\right\rfloor}$
  to move the chosen sensors to equidistant positions so as to cover the $d$-dimensional subcube;}
  \ENDIF
\end{algorithmic}
\end{algorithm}

\begin{theorem}
\label{thm:log}
Fix $d\in N\setminus\{1\}$ and $a>0.$
Let $f\ge \frac{3^{3/d}}{(3^{1/d}-1)(3^{1/d}-1)}$ and  $n\ge \lceil x_0 \rceil,$ where $x_0$
is the real solution of the equation $\frac{x}{\frac{9}{4}(2+\frac{a}{d})\ln x}=3$ such that $x_0\ge 3.$
Assume that $n$ sensors of d-dimensional cube sensing 
radius $r=\frac{f}{2n^{1/d}}$ are thrown randomly and uniformly and 
independently with the uniform distribution on the $[0,1]^d.$ 
The expected $a-$total movement of algorithm $LV_d(n)$
is in 
$O\left(n^{1-\frac{a}{2d}}\left(\frac{\ln n}{n}\right)^{\frac{a}{2d}}\right).$
\end{theorem}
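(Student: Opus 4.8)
The plan is to analyze the two branches of Algorithm $LV_d(n)$ separately, according to whether the test in line~$(2)$ succeeds. Write $K=\left\lfloor\left(\frac{n}{p\ln n}\right)^{1/d}\right\rfloor$, so the unit cube is split into $M=K^{d}$ subcubes, each of side $y=1/K$ and volume $1/M$, and let $\mu=n/M$ be the expected number of sensors per subcube; since $K^{d}\le\frac{n}{p\ln n}$ we have $\mu\ge p\ln n$. Let $B$ denote the bad event that some subcube contains fewer than $\frac{\mu}{3}$ sensors. On the good event $B^{c}$ every subcube has at least $\frac{\mu}{3}\ge\frac{p}{3}\ln n=A\ln n\ge\left\lfloor(A\ln n)^{1/d}\right\rfloor^{d}$ sensors, which is exactly what is needed for the \textbf{else}-branch (line~$(6)$) to be executable; the key identity behind the constants is $p/3=A$. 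I would then split $\E{M_a}=\E{M_a\mathbf{1}_{B^{c}}}+\E{M_a\mathbf{1}_{B}}$ and show the first term matches the target while the second is of strictly smaller order.

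\textbf{Good branch.} On $B^{c}$ the algorithm selects, in each subcube, $N=\left\lfloor(A\ln n)^{1/d}\right\rfloor^{d}$ sensors uniformly at random and runs $MV_d(N,y)$. Since a uniformly random subset of the sensors lying in a subcube is itself uniformly distributed there, the conditional cost in a fixed subcube equals that of $MV_d(N,y)$ on $N$ independent uniform points, of expectation $O\!\left(y^{a}N^{1-\frac{a}{2d}}\right)$ by Lemma~\ref{lem:scale}, independently of the exact count as long as it is at least $N$. Summing over the $M$ subcubes and substituting $y\asymp(p\ln n/n)^{1/d}$, $N\asymp A\ln n$, $M\asymp n/(p\ln n)$, I obtain after collecting powers $\E{M_a\mathbf{1}_{B^{c}}}=O\!\left(\frac{n}{\ln n}\left(\frac{\ln n}{n}\right)^{a/d}(\ln n)^{1-\frac{a}{2d}}\right)=O\!\left(n^{1-\frac{a}{d}}(\ln n)^{\frac{a}{2d}}\right)$, i.e. exactly the claimed $O\!\left(n^{1-\frac{a}{2d}}(\ln n/n)^{a/(2d)}\right)$; the exponents of both $n$ and $\ln n$ work out precisely, with no slack to spare.

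\textbf{Bad branch.} The number of sensors in a fixed subcube is Binomial with mean $\mu\ge p\ln n$, so the multiplicative Chernoff bound $\Pr[X\le(1-\tfrac23)\mu]\le e^{-\mu(2/3)^2/2}=e^{-2\mu/9}$ and a union bound over $M\le n/(p\ln n)$ subcubes give $\Pr[B]\le\frac{n}{p\ln n}\,n^{-\frac29 p}$. The constant $p=\frac94\!\left(2+\frac{a}{d}\right)$ is tuned so that $\frac29 p=1+\frac{a}{2d}$, whence $\Pr[B]\le\frac{1}{p\ln n}\,n^{-\frac{a}{2d}}$. On $B$ the algorithm runs $MV_d(n,1)$ on $\lfloor n^{1/d}\rfloor^{d}$ sensors, and if the expected $MV_d$ cost conditioned on $B$ were still $O\!\left(n^{1-\frac{a}{2d}}\right)$ (the scale of Theorem~\ref{thm:3_4}), then $\E{M_a\mathbf{1}_{B}}=O\!\left(n^{1-\frac{a}{2d}}\right)\Pr[B]=O\!\left(n^{1-\frac{a}{d}}/\ln n\right)$, of strictly smaller order than the good-branch term, finishing the proof.

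The main obstacle is exactly this last conditional estimate. The naive route — bounding the displacement deterministically by its worst case, each of the $\le n$ moved sensors travelling at most the diameter $\sqrt d$, so $M_a\le n\,d^{a/2}$ — yields only $\E{M_a\mathbf{1}_{B}}=O(n)\Pr[B]=O\!\left(n^{1-\frac{a}{2d}}/\ln n\right)$, whose $n$-exponent $1-\frac{a}{2d}$ exceeds the target exponent $1-\frac{a}{d}$ and is therefore too lossy once $a/d$ is large. To reach the target one must retain the $O\!\left(n^{1-\frac{a}{2d}}\right)$ scale of the cost even after conditioning on the (exponentially unlikely) deficit event $B$: either argue that underpopulating one subcube perturbs the transport to the anchor grid only locally, inflating $\E{M_a}$ by at most a constant factor, or establish a moment/tail strengthening of Theorem~\ref{thm:3_4} sharp enough that Hölder against $\Pr[B]^{1-1/s}$ closes the gap for a suitable $s=s(a/d)$. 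Note that a plain second-moment Cauchy--Schwarz is insufficient, as it gives only $O\!\left(n^{1-\frac{3a}{4d}}/\sqrt{\ln n}\right)$, still above the target; this is where I expect the real work to lie.
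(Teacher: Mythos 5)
Your decomposition is exactly the paper's: the same partition into $K^d$ subcubes, the same Chernoff tuning of $p=\frac94\left(2+\frac ad\right)$ so that the failure probability is below $\left\lfloor\left(\frac{n}{p\ln n}\right)^{1/d}\right\rfloor^d n^{-1-\frac{a}{2d}}$, and the same good-branch computation via Lemma~\ref{lem:scale}, which you carry out correctly (your observation that, conditioned on the subcube counts, the chosen sensors are i.i.d.\ uniform in their subcube is the right justification for applying the lemma on the event $B^{c}$, and is in fact more careful than the paper, which invokes it without comment). So the only substantive issue is the bad-branch term, and here you should know that the paper does not do the work you are looking for: it simply writes the total cost as $(1-\Pr[B])\,O\bigl(n^{1-\frac{a}{2d}}(\ln n/n)^{\frac{a}{2d}}\bigr)+\Pr[B]\,O\bigl(n^{1-\frac{a}{2d}}\bigr)$, i.e.\ it applies Theorem~\ref{thm:3_4} to the conditional law given $B$ as if the sensors were still i.i.d.\ uniform. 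Your diagnosis that this step is not literally licensed by Theorem~\ref{thm:3_4}, and that neither the deterministic bound $M_a\le n\,d^{a/2}$ nor Cauchy--Schwarz closes it with the stated constant $p$, is accurate.

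However, the heavy machinery you anticipate (local perturbation arguments, higher-moment versions of Theorem~\ref{thm:3_4} plus H\"older) is not needed: the clean fix is to retune the constant. If one takes $p\ge\frac92\left(1+\frac ad\right)$ instead of $\frac94\left(2+\frac ad\right)$, the same Chernoff computation with $\delta=\frac23$ and $\mu\ge p\ln n$ gives $\Pr[B]\le n^{1-\frac{2p}{9}}\le n^{-\frac ad}$, and then the trivial pathwise bound $M_a\le n\,d^{a/2}$ already yields
\begin{equation*}
\E{M_a\mathbf{1}_{B}}\le d^{a/2}\,n\Pr[B]=O\!\left(n^{1-\frac ad}\right)=O\!\left(n^{1-\frac{a}{2d}}\left(\tfrac{\ln n}{n}\right)^{\frac{a}{2d}}\right),
\end{equation*}
with no conditional-expectation subtleties at all. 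Enlarging $p$ (and hence $A=p/3$) only changes constants in the good branch and in the coverage verification $a_n^d>1$, so the asymptotic bound there is unaffected. With that one adjustment your outline becomes a complete proof; as written, both your proposal and the paper leave the bad-branch estimate resting on an unjustified conditional application of Theorem~\ref{thm:3_4}.
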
 
\begin{proof}
Assume that $d\in N\setminus\{1\}$ and $a>0.$
Let $p=\frac{9}{4}\left(2+\frac{a}{d}\right)$ and  $A=\frac{3}{4}\left(2+\frac{a}{d}\right),$ 
$x_0$ is the real solution of the equation $\frac{x}{\frac{9}{4}(2+\frac{a}{d})\ln x}=3$ such that $x_0\ge 3.$
First of all, observe that $\frac{n}{p\ln (n)}> 3$ for $n\ge \lceil x_0\rceil .$
We will prove that Algorithm $LV_d(n)$ uses $O\left(n^{1-\frac{a}{2d}}\left(\frac{\ln n}{n}\right)^{\frac{a}{2d}}\right)$
expected $a-$total movement.
There are two cases to consider.

Case 1: There exists a $d-$dimensional subcube with fewer than
$$\frac{1}{3}\frac{n}{\left\lfloor\left(\frac{n}{p\ln n}\right)^{1/d}\right\rfloor^d}$$
sensors. In this case choose 
$\lfloor n^{1/d} \rfloor^d$ sensors uniformly and randomly from $n$ sensors. Applying the inequalities
$\lfloor x\rfloor > x-1$ and $f\ge \frac{3^{3/d}}{(3^{1/d}-1)(3^{1/d}-1)}>\frac{3^{1/d}}{3^{1/d}-1}$
we deduce that 
$$
\left(\lfloor n^{1/d}\rfloor\frac{f}{n^{1/d}}\right)^d>\left(\frac{n^{1/d}-1}{n^{1/d}}\frac{3^{1/d}}{3^{1/d}-1}\right)^d\ge 1\,\,\,\text{for}\,\,\, n\ge 3.
$$ 
Therefore, the $\lfloor n^{1/d}\rfloor^d$
chosen sensors are enough to attain the coverage.
The expected $a-$total movement is 
$O\left(\left(\lfloor n^{1/d}\rfloor^d\right)^{1-\frac{a}{2d}}\right)=O\left(n^{1-\frac{a}{2d}}\right)$ by Theorem \ref{thm:3_4}.

Case 2:  All $d-$dimensional subcubes contain at least $\frac{1}{3}\frac{n}{\left\lfloor\left(\frac{n}{p\ln n}\right)^{1/d}\right\rfloor^d}$ sensors. From the inequality 
$\lfloor x\rfloor \le x$
we deduce that, 
$$\left\lfloor\left(A\ln n\right)^{1/d}\right\rfloor^d\le\frac{1}{3}\frac{n}{\left\lfloor\left(\frac{n}{p\ln n}\right)^{1/d}\right\rfloor^d}.$$
Hence it is possible to choose $\left\lfloor\left(A\ln n\right)^{1/d}\right\rfloor^d$ sensors at random in each $d$-dimensional subcube
with more than 
$\frac{1}{3}\frac{n}{\left\lfloor\left(\frac{n}{p\ln n}\right)^{1/d}\right\rfloor^d}$
sensors. 
Let us consider the sequence

$$a_n=\frac{3^{3/d}}{(3^{1/d}-1)(3^{1/d}-1)}\left\lfloor\left(A\ln n\right)^{1/d}\right\rfloor \frac{1}{n^{1/d}} \left\lfloor\left(\frac{n}{p\ln n}\right)^{1/d}\right\rfloor
$$
for $n\ge \lceil x_0\rceil.$ 
%Numerical calculation for small values of 
%$n$ ($31 < n < 250$) confirms that $a_n>1.$
Applying inequality $\lfloor x\rfloor>x-1$ we see that
$$
a_n>\frac{3^{3/d}}{(3^{1/d}-1)(3^{1/d}-1)}\left(\left(A\ln n\right)^{1/d}-1\right) \frac{1}{n^{1/d}} \left(\left(\frac{n}{p\ln n}\right)^{1/d}-1\right)
$$
\begin{equation}
\label{eq:an1}
=\frac{3^{2/d}}{(3^{1/d}-1)(3^{1/d}-1)}\left(1-\frac{1}{\left(A\ln n\right)^{1/d}}\right)  \left(1-\left(\frac{p\ln n}{n}\right)^{1/d}\right)
\end{equation}
Observe that
\begin{equation}
\label{eq:an2}
\frac{p\ln n}{n} \le \frac{1}{3},\,\, \frac{1}{A\ln n}\le \frac{1}{3}\,\,\,\text{for}\,\,\, n\ge \lceil x_0\rceil
\end{equation}

Putting together Equation (\ref{eq:an1}) and Equation (\ref{eq:an2}) we get
$$\left\lfloor\left(A\ln n\right)^{1/d}\right\rfloor^d \frac{f^d}{n} \left\lfloor\left(\frac{n}{p\ln n}\right)^{1/d}\right\rfloor^d\ge a^d_n> 1.$$
Therefore, $\left\lfloor\left(A\ln n\right)^{1/d}\right\rfloor^d$ 
chosen sensors are enough to attain the coverage.
By the independence of the sensors positions, the $\left\lfloor\left(A\ln n\right)^{1/d}\right\rfloor^d$ 
chosen sensors in any given $d$-dimensional subcube are distributed randomly and independently with uniform distribution over the $d$-dimensional subcube of side
%$\sqrt{\frac{p\ln n}{n}}.$  
$y=\frac{1}{\left\lfloor\left(\frac{n}{p\ln n}\right)^{1/d}\right\rfloor}.$
By Lemma \ref{lem:scale} the expected $a-$total movement inside each $d$-dimensional subcube is 
$$O\left(\left(\frac{1}{\left\lfloor\left(\frac{n}{p\ln n}\right)^{1/d}\right\rfloor}\right)^a\left(\left\lfloor(A\ln n)^{1/d}\right\rfloor^d\right)^{1-\frac{a}{2d}}\right)=
O\left(\frac{(\ln n)^{\frac{a}{2d}}}{n^{\frac{a}{d}}}(\ln n)\right).$$
Since, there are $\left\lfloor\left(\frac{n}{p\ln n}\right)^{1/d}\right\rfloor^d$ $d$-dimensional subcubes, the expected $a-$total movement  over all $d$-dimensional subcubes must be in
$O\left(n^{1-\frac{a}{2d}}\left(\frac{\ln n}{n}\right)^{\frac{a}{2d}}\right).$
It remains to consider the probability with which each of these cases occurs. The proof of the theorem will be a consequence of the following Claim.
\begin{Claim}
\label{claim:first}
Let $p=\frac{9}{4}\left(2+\frac{a}{d}\right).$
 The probability that fewer than 
 $\frac{1}{3}\frac{n}{\left\lfloor\left(\frac{n}{p\ln n}\right)^{1/d}\right\rfloor^d}$ 
 %$\frac{1}{2}\frac{n}{\left\lfloor\left(\frac{n}{p\ln n}\right)^{1/d}\right\rfloor^d}-\sqrt{\frac{\left(2+2\left(1-\frac{a}{2d}\right)\right)n\ln n }{\left\lfloor\left(\frac{n}{p\ln n}\right)^{1/d}\right\rfloor^d}}$ 
 sensors fall in any $d$-dimensional subcube is
 $<\frac{\left\lfloor\left(\frac{n}{p\ln n}\right)^{1/d}\right\rfloor^d}{n^{1+\frac{a}{2d}}}.$
\end{Claim}
\begin{proof} (Claim~\ref{claim:first})
First of all, from the inequality $\lfloor x\rfloor \le x$ we get
$$\sqrt{\frac{\left(2+\frac{a}{d}\right)\ln n}{n}\left\lfloor\left(\frac{n}{p\ln n}\right)^{1/d}\right\rfloor^d}\le \frac{2}{3}.$$ 
Hence,
\begin{equation}
\label{eq:chernof}
\frac{1}{3}\frac{n}{\left\lfloor\left(\frac{n}{p\ln n}\right)^{1/d}\right\rfloor^d}\le
\frac{n}{\left\lfloor\left(\frac{n}{p\ln n}\right)^{1/d}\right\rfloor^d}-\sqrt{\frac{\left(2+\frac{a}{d}\right)n \ln n  }{\left\lfloor\left(\frac{n}{p\ln n}\right)^{1/d}\right\rfloor^d}}.
\end{equation}
The number of sensors falling in a $d$-dimensional subcube is a Bernoulli process with probability of success $\frac{1}{\left\lfloor\left(\frac{n}{p\ln n}\right)^{1/d}\right\rfloor^d}.$ By Chernoff bounds, the probability
that a given $d$-dimensional subcube has fewer than 
$$
\frac{n}{\left\lfloor\left(\frac{n}{p\ln n}\right)^{1/d}\right\rfloor^d}-\sqrt{\frac{\left(2+\frac{a}{d}\right)n\ln n }{\left\lfloor\left(\frac{n}{p\ln n}\right)^{1/d}\right\rfloor^d}}
$$ 
sensors is less than 
$e^{-\left(1+\frac{a}{2d}\right)\ln n}<\frac{1}{n^{1+\frac{a}{2d}}}.$ 
Specifically we use the Chernoff bound 
$$\Pr[X<(1-\delta)m]<e^{-{\delta}^2m/2},$$
$m=\frac{n}{\left\lfloor\left(\frac{n}{p\ln n}\right)^{1/d}\right\rfloor^d},$
$\delta=\sqrt{\frac{\left(2+\frac{a}{d}\right)\ln n}{n}\left\lfloor\left(\frac{n}{p\ln n}\right)^{1/d}\right\rfloor^{d}}.$
As there are $\left\lfloor\left(\frac{n}{p\ln n}\right)^{1/d}\right\rfloor^d$ $d$-dimensional subcubes, the event that one has fewer than 
$$
\frac{n}{\left\lfloor\left(\frac{n}{p\ln n}\right)^{1/d}\right\rfloor^d}-\sqrt{\frac{\left(2+\frac{a}{d}\right)n\ln n }{\left\lfloor\left(\frac{n}{p\ln n}\right)^{1/d}\right\rfloor^d}}.
$$
sensors occurs with probability less than $\frac{\left\lfloor\left(\frac{n}{p\ln n}\right)^{1/d}\right\rfloor^d}{n^{1+\frac{a}{2d}}}.$ This and Equation \eqref{eq:chernof} completes the proof of Claim \ref{claim:first}. \qed
\end{proof}
Using Claim  \ref{claim:first} we can upper bound the expected $a-$total movement as follows:
$$\left(1-\frac{\left\lfloor\left(\frac{n}{p\ln n}\right)^{1/d}\right\rfloor^{d}}{n^{1+\frac{a}{2d}}}\right)
O\left(n^{1-\frac{a}{2d}}\left(\frac{\ln n}{n}\right)^{\frac{a}{2d}}\right)
+\left(\frac{\left\lfloor\left(\frac{n}{p\ln n}\right)^{1/d}\right\rfloor^{d}}{n^{1+\frac{a}{2d}}}\right)O\left(n^{1-\frac{a}{2d}}\right)=$$
$$O\left(n^{1-\frac{a}{2d}}\left(\frac{\ln n}{n}\right)^{\frac{a}{2d}}\right),$$
which proves Theorem \ref{thm:log}. \qed
\end{proof}

\section{Simulation Results}
\label{sec:simu}

In this section we use simulation results to analyze how random placement of sensors on the square impacts the expected
$a-$total movement. 

We repeated 3 times the following experiments.
Firstly, for each number of sensors $n\in\{2^2,3^2,4^2,\dots,60^2\}$ we generated $32$ random placements.
Then we calculated the expected $a-$total movement according to Algorithm $MV_2(n,x)$.
Let $E_{n,32}$ be the average of $32$ measurements of the expected $a-$total movement. Then, we placed the points in the set $\{(n, E_{n,32}): n=2^2,3^2,4^2,\dots,60^2\}$ into the picture. 

\begin{figure}
  \begin{center}
\begin{tabular}{cc}
 \begin{minipage}[b]{0.50\textwidth} \centering \includegraphics[width=0.90\textwidth]{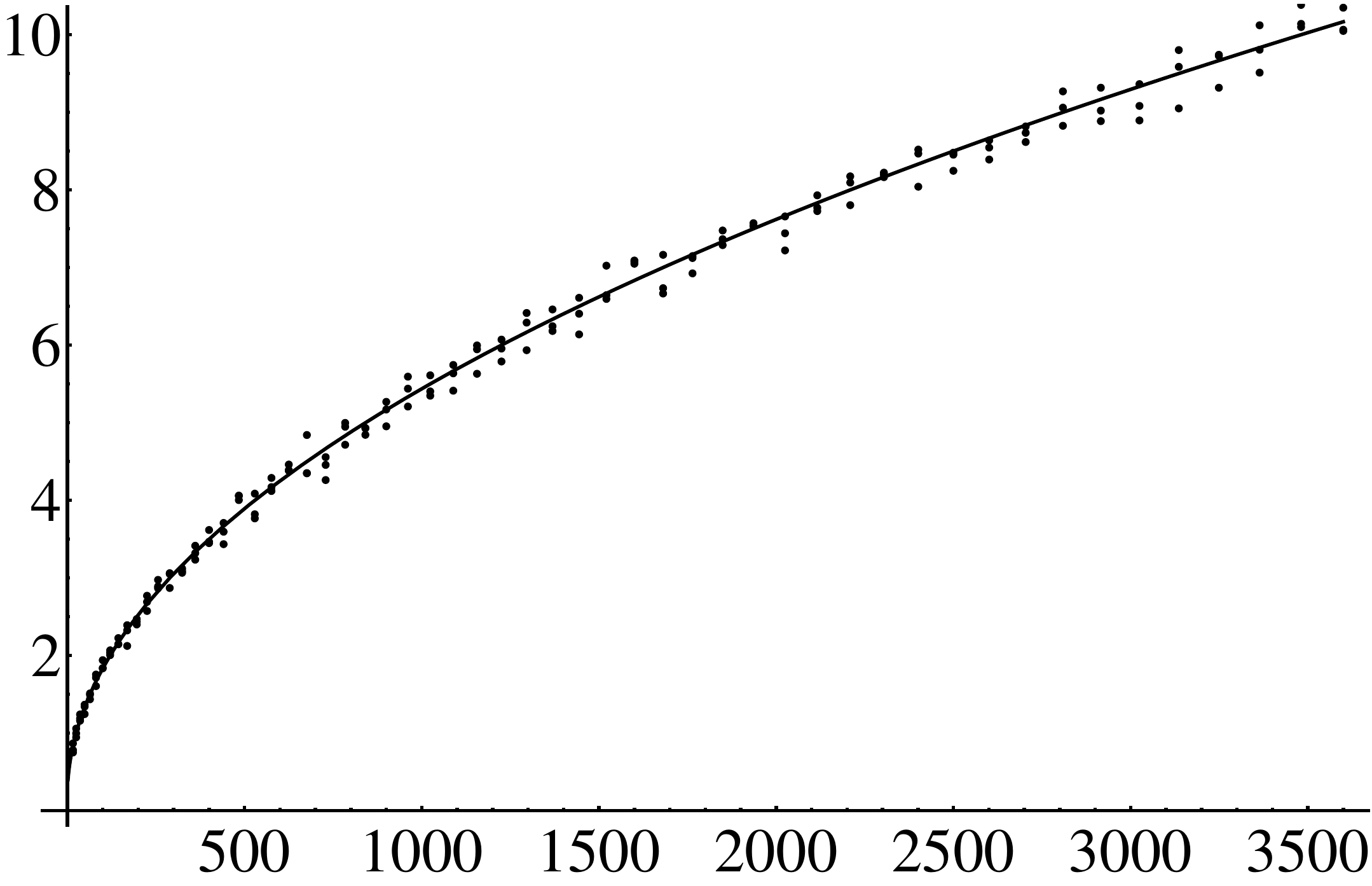}\\ $a=2, M_{2}=\frac{1}{6}\sqrt{n}$ \end{minipage}
&
 \begin{minipage}[b]{0.50\textwidth} \centering \includegraphics[width=0.90\textwidth]{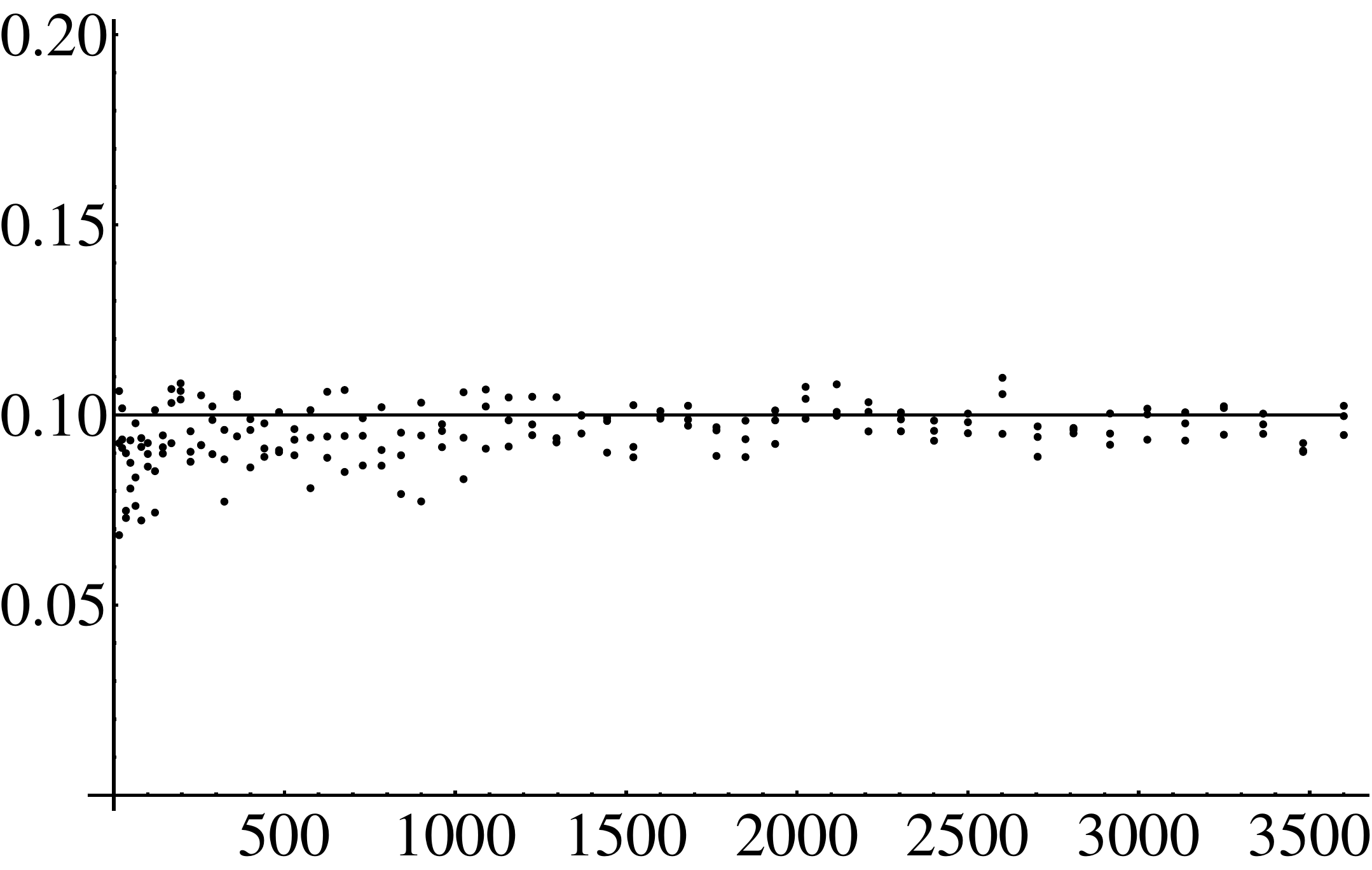}\\ $a=4, M_{4}=\frac{1}{10}$ \end{minipage}
  \\
\end{tabular}
\end{center}
  \caption{The expected $a-$total movement of Algorithm $MV_2(n,1).$}
    \label{fig:alg1}
\end{figure}
Figure \ref{fig:alg1} illustrates the described experiments for Algorithm $MV_2(n,1)$  when $a=2$ and $a=4.$
The additional line in the above pictures is the plot of the function which is the theoretical estimation.
Black dots which represent numerical results are situated near the theoretical line.
According to the proof of Theorem \ref{thm:3_4} the steps (7-9) of Algorithm $MV_2(n,1)$ conctribute the asymptotics.
Notice that, the expected $a-$total movement in steps (7-9) of Algorithm $MV_2(n,1)$ is equal to
$$E_{(7-9)}^{(a)}=\sqrt{n}\sum_{i=1}^{\sqrt{n}}i\binom{\sqrt{n}}{i}\int_{0}^{1}\left|x-\left(\frac{j}{\sqrt{n}}-\frac{1}{2\sqrt{n}}\right)\right|^a
x^{i-1}(1-x)^{\sqrt{n}-i}dx.$$
Applying the Formulas for $E_{(7-9)}^{(2)}$ and $E_{(7-9)}^{(4)}$ in any mathematical software that performs symbolic calculation we get
$$E_{(7-9)}^{(2)}\sim \frac{1}{6}\sqrt{n} \mbox{ and } E_{(7-9)}^{(4)}\sim \frac{1}{10}.$$ Therefore, $M_{2}=\frac{1}{6}\sqrt{n}$ and
$M_{4}=\frac{1}{10}.$

\begin{figure}
  \begin{center}
\begin{tabular}{cc}
 \begin{minipage}[b]{0.50\textwidth} \centering \includegraphics[width=0.90\textwidth]{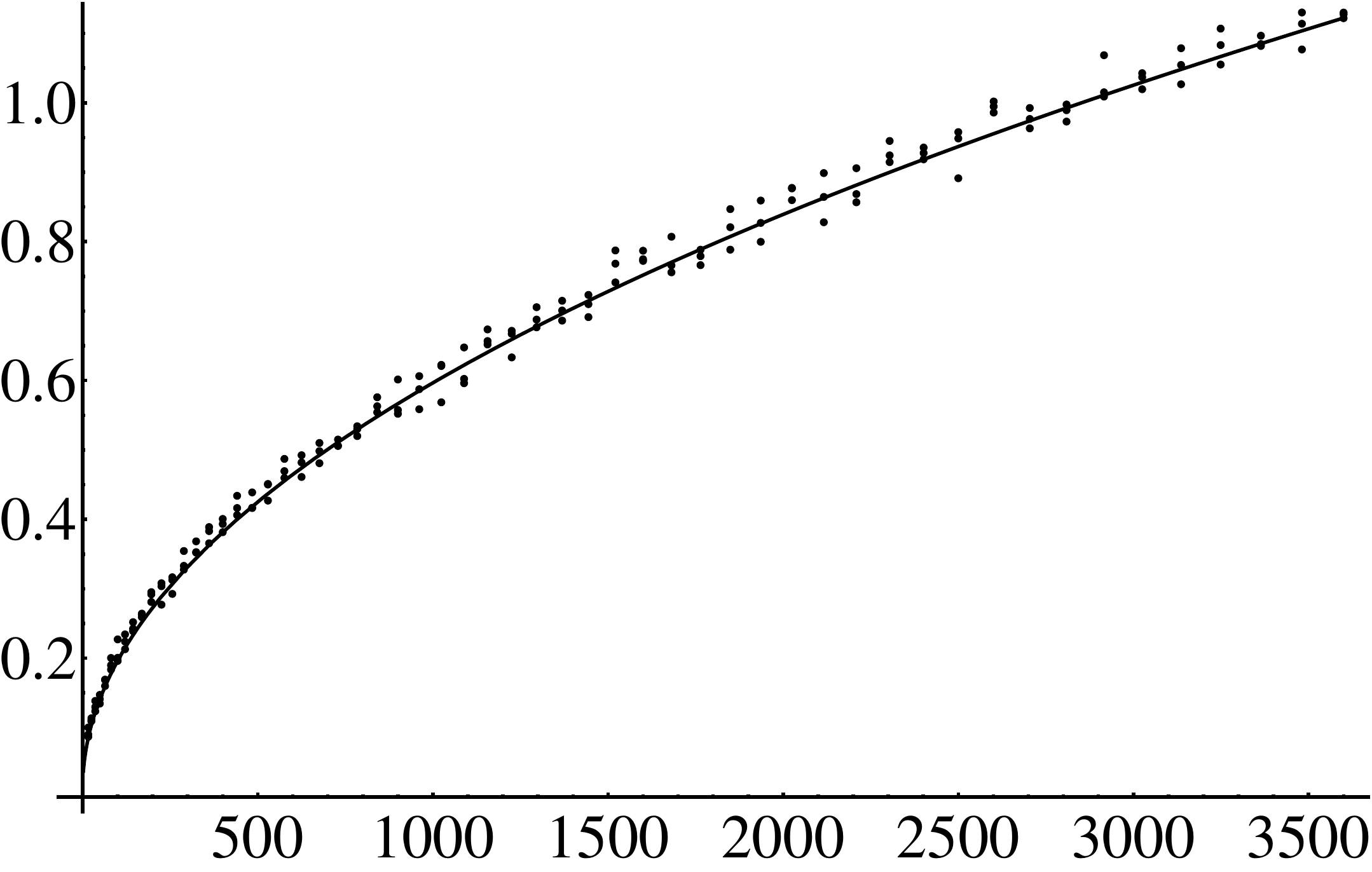}\\ $a=2, y=\frac{1}{3}, M_{2}=\frac{1}{54}\sqrt{n}$ \end{minipage}
&
 \begin{minipage}[b]{0.50\textwidth} \centering \includegraphics[width=0.90\textwidth]{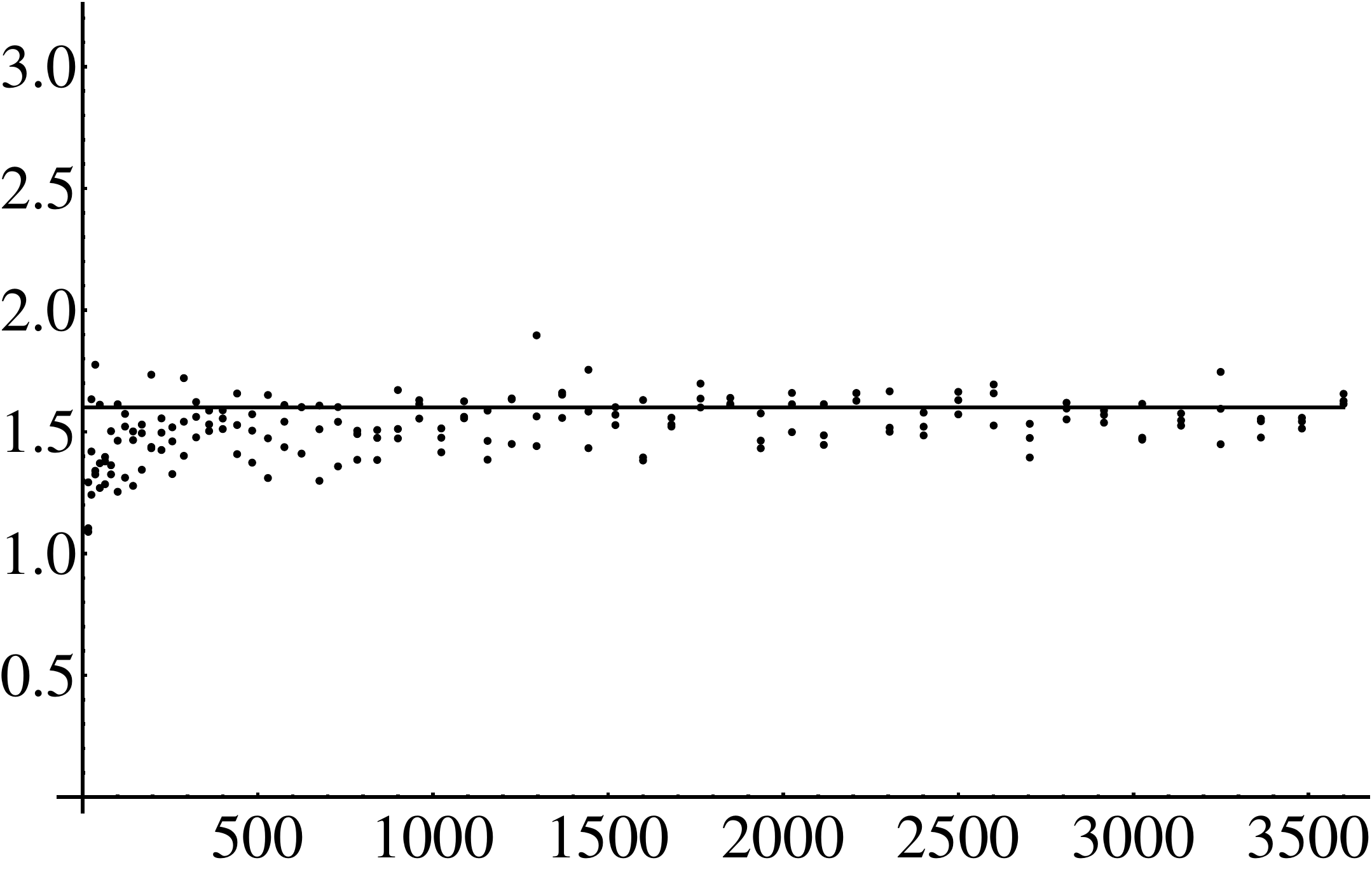}\\ $a=4, y=2, M_{4}=\frac{4}{5}$ \end{minipage}
  \\
\end{tabular}
  \end{center}
  \caption{The expected $a-$total movement of Algorithm $MV_2(n,y)$.}
    \label{fig:alg2}
\end{figure}

Figure \ref{fig:alg2} illustrates the described experiments for Algorithm $MV_2(n,\frac{1}{3})$ when $a=2$ and $MV_2(n,2)$ when $a=4.$
The additional line in the above pictures is the plot of the function which is the theoretical estimation.
Black dots which represent numerical results are situated near the theoretical line.
According to the proof of Lemma \ref{lem:scale} we have $M_{2}=\left(\frac{1}{3}\right)^2\frac{1}{6}\sqrt{n}=\frac{1}{54}\sqrt{n}$
and $M_{4}=2^4\frac{1}{10}=\frac{4}{5}.$

\section{Conclusion}
\label{sec:conclusion}

In this paper we studied the movement of $n$ sensors with identical square sensing radius in $d$ dimensions when the cost of movement
of sensor is proportional to some (fixed) power $a>0$ of the distance traveled. We obtained bounds on the movement depending
on the range of sensors. %An interesting problem would be to study the displacement to a power $a$ for all $a>0$ in higher dimensions.
%\bibliographystyle{plain}
%\bibliography{refs,refs1}

\bibliographystyle{plain}
\bibliography{refs,refs1}

\begin{thebibliography}{10}

\bibitem{nagaraja_1992}
B.~Arnold, N.~Balakrishnan, and H.~Nagaraja.
\newblock {\em A first course in order statistics}, volume~54.
\newblock SIAM, 1992.

\bibitem{barriercoverageNodeDegree}
P.~Balister, B.~Bollobas, A.~Sarkar, and S.~Kumar.
\newblock Reliable density estimates for coverage and connectivity in thin
  strips of finite length.
\newblock In {\em Proceedings of MobiCom '07}, pages 75--86. ACM, 2007.

\bibitem{Evan09}
B.~Bhattacharya, M.~Burmester, Y.~Hu, E.~Kranakis, Q.~Shi, and A.~Wiese.
\newblock Optimal movement of mobile sensors for barrier coverage of a planar
  region.
\newblock {\em Theoretical Computer Science}, 410(52):5515 -- 5528, 2009.

\bibitem{che2}
D.~Z. Chen, Y.~Gu, J.~Li, and H.~Wang.
\newblock Algorithms on minimizing the maximum sensor movement for barrier
  coverage of a linear domain.
\newblock In {\em Proceedings of SWAT'12}, pages 177--188, 2012.

\bibitem{MinMax}
J.~Czyzowicz, E.~Kranakis, D.~Krizanc, I.~Lambadaris, L.~Narayanan, J.~Opatrny,
  L.~Stacho, J.~Urrutia, and M.~Yazdani.
\newblock On minimizing the maximum sensor movement for barrier coverage of a
  line segment.
\newblock In {\em Proceedings of ADHOC-NOW, LNCS v. 5793}, pages 194--212,
  2009.

\bibitem{barrierMinSum}
J.~Czyzowicz, E.~Kranakis, D.~Krizanc, I.~Lambadaris, L.~Narayanan, J.~Opatrny,
  L.~Stacho, J.~Urrutia, and M.~Yazdani.
\newblock On minimizing the sum of sensor movements for barrier coverage of a
  line segment.
\newblock In {\em Proceedings of ADHOC-NOW, LNCS v. 6288}, pages 29--42, 2010.

\bibitem{dobrev2013complexity}
S.~Dobrev, S.~Durocher, M.~Eftekhari, K.~Georgiou, E.~Kranakis, D.~Krizanc,
  L.~Narayanan, J.~Opatrny, S.~Shende, and J.~Urrutia.
\newblock Complexity of barrier coverage with relocatable sensors in the plane.
\newblock In {\em CIAC}, pages 170--182. Springer, 2013.

\bibitem{eftekhari13d}
M.~Eftekhari, E.~Kranakis, D.~Krizanc, O.~Morales-Ponce, L.~Narayanan,
  J.~Opatrny, and S.~Shende.
\newblock Distributed local algorithms for barrier coverage using relocatable
  sensors.
\newblock In {\em Proceeding of ACM PODC Symposium}, pages 383--392, 2013.

\bibitem{eftekhari13}
M.~Eftekhari, L.~Narayanan, and J.~Opatrny.
\newblock On multi-round sensor deployment for barrier coverage.
\newblock In {\em Proceedings of 10th IEEE MASS}, pages 310--318, 2013.

\bibitem{full_coverage2}
C.~F. Huang and Y.~C. Tseng.
\newblock The coverage problem in a wireless sensor network.
\newblock In {\em WSNA '03: Proceedings of the 2nd ACM International Conference
  on Wireless Sensor Networks and Applications}, pages 115--121. ACM, 2003.

\bibitem{KapelkoK15}
R.~Kapelko and E.~Kranakis.
\newblock On the displacement for covering a square with randomly placed
  sensors.
\newblock In {\em Ad-hoc, Mobile, and Wireless Networks - 14th International
  Conference, {ADHOC-NOW} 2015, Athens, Greece, June 29 - July 1, 2015,
  Proceedings}, pages 148--162, 2015.

\bibitem{kkpower}
R.~Kapelko and E.~Kranakis.
\newblock On the displacement for covering a unit interval with randomly placed
  sensors.
\newblock {\em ArXiv:1507.08923}, 2015.

\bibitem{spa_2013}
E.~Kranakis, D.~Krizanc, O.~Morales-Ponce, L.~Narayanan, J.~Opatrny, and
  S.~Shende.
\newblock Expected sum and maximum of displacement of random sensors for
  coverage of a domain.
\newblock In {\em Proceedings of SPAA}, pages 73--82. ACM, 2013.

\bibitem{barriercoverage05}
S.~Kumar, T.~H. Lai, and A.~Arora.
\newblock Barrier coverage with wireless sensors.
\newblock In {\em Proceedings of MobiCom '05}, pages 284--298. ACM, 2005.

\bibitem{full_coverage1}
S.~Meguerdichian, F.~Koushanfar, M.~Potkonjak, and M.B. Srivastava.
\newblock Coverage problems in wireless ad-hoc sensor networks.
\newblock In {\em Proceedings of INFOCOM, vol, 3}, pages 1380--1387, 2001.

\bibitem{mon}
M.~Mehrandish, L.~Narayanan, and J.~Opatrny.
\newblock Minimizing the number of sensors moved on line barriers.
\newblock In {\em Proceedings of IEEE WCNC'11}, pages 1464--1469, 2011.

\bibitem{NIST}
NIST Digital~Library of~Mathematical~Functions.
\newblock http://dlmf.nist.gov/8.17.

\bibitem{yan}
G.~Yan and D.~Qiao.
\newblock Multi-round sensor deployment for guaranteed barrier coverage.
\newblock In {\em Proceedings of IEEE INFOCOM'10}, pages 2462--2470, 2010.

\end{thebibliography}

\end{document}